\newcommand{\weakaddition}{{\tt Weak-Any}}
\newcommand{\strongaddition}{{\tt Strong-Any}}
\newcommand{\weakgreedy}{{\tt Weak-Greedy}}
\newcommand{\stronggreedy}{{\tt Strong-Greedy}}
\newcommand{\an}{$(k,1)$}
\newcommand{\anonvar}{$(k,\ell)$}
\newcommand{\secondvar}{$\ell$}
\newtheorem{problem}{{Problem}}
\newtheorem{theorem}{{Theorem}}
\newtheorem{example}{{Example}}
\newtheorem{definition}{{Definition}}
\newtheorem{lemma}{{Lemma}}
\newtheorem{proposition}{{Proposition}}
\newcommand{\squishlist}{
   \begin{list}{$\bullet$}
    { \setlength{\itemsep}{0pt}      \setlength{\parsep}{3pt}
      \setlength{\topsep}{3pt}       \setlength{\partopsep}{0pt}
      \setlength{\leftmargin}{1.5em} \setlength{\labelwidth}{1em}
      \setlength{\labelsep}{0.5em} } }
\newcommand{\squishlisttwo}{
   \begin{list}{$\bullet$}
    { \setlength{\itemsep}{0pt}    \setlength{\parsep}{0pt}
      \setlength{\topsep}{0pt}     \setlength{\partopsep}{0pt}
      \setlength{\leftmargin}{2em} \setlength{\labelwidth}{1.5em}
      \setlength{\labelsep}{0.5em} } }
\newcommand{\squishend}{
    \end{list}  }
\begin{document}

\title{Anonymizing Graphs}
\author{
Tom\'{a}s Feder\\Stanford University \and Shubha U. Nabar\\Stanford University \and Evimaria Terzi\\IBM Almaden
 }
\date{}
 
\maketitle

\begin{abstract}
Motivated by recently discovered privacy attacks on social networks, we study the problem of anonymizing the underlying graph of interactions in a social network. We call a graph \emph{\anonvar-anonymous} if for every node in the
graph there exist at least $k$ other nodes that share at least
$\ell$ of its neighbors. We consider two combinatorial problems
arising from this notion of anonymity in graphs. More
specifically, given an input graph we ask for the minimum number
of edges to be added so that the graph becomes
\emph{\anonvar-anonymous}. We define two variants of this
minimization problem and study their properties. We show that for
certain values of $k$ and $\ell$ the problems are polynomial-time solvable,
while for others they become NP-hard. Approximation algorithms for
the latter cases are also given.
\end{abstract}

\pagenumbering{arabic}

\newcommand{\DEF}[1]{{\em #1\/}}

\newcommand\chic{\chi_c}
\newcommand\C{\hbox{${\cal C}$}}
\newcommand{\RR}{\mbox{$\mathbb R$}}
\newcommand{\NN}{\mbox{$\mathbb N$}}
\newcommand{\ZZ}{\mbox{$\mathbb Z$}}
\newcommand{\eopf}{\raisebox{0.8ex}{\framebox{}}}
\newcommand{\dist}{\hbox{\rm d}}
\renewcommand\a{\alpha}
\renewcommand\b{\beta}
\renewcommand\c{\gamma}
\renewcommand\d{\delta}
\newcommand\D{\Delta}
\newcommand{\directedchi}{\mbox{$\vec{\chi}$}}
\newcommand{\directedE}{\mbox{$\vec{E}$}}
\newcommand{\directedG}{\mbox{$\vec{G}$}}
\newcommand{\directedK}{\mbox{$\vec{K}$}}

\section{Introduction}\label{introduction}

The popularity of online communities and social networks
in recent years has motivated research on social-network analysis.
Though these studies are useful in uncovering
the underpinnings of human social behavior, they also raise
privacy concerns for the individuals involved.

A social network is usually represented as a graph, where nodes
correspond to individuals and edges capture relationships between
these individuals. For example, in LinkedIn, an online network of
professionals, every link between two users specifies a
professional relationship between them. In Facebook and Orkut
links correspond to friendships. There are online communities that
permit any user to access the information of every node in the graph
and view its neighbors. However, many communities are
increasingly restricting access to the personal information of other users. For example, in LinkedIn, a user can only see the profiles of his own friends and their connections.

In this paper, we consider a scenario where the owner of a
social network would like to release the underlying
graph of interactions for social-network analysis purposes, while preserving the privacy of its users. More specifically,
the private information to be protected
is the mapping of nodes to real-world entities and interconnections amongst them.
Therefore, we design an anonymization framework that tries to hide the identity of nodes by creating groups of nodes that look similar by virtue of sharing many of the same neighbors. We call such nodes {\em anonymized}. Our goal is to anonymize all nodes of the graph by introducing minimal changes to the overall graph structure. In this way we can guarantee that the anonymized graph is still useful for social-network analysis purposes.

Recently, Backstrom et. al.~\cite{backstrom} have shown that the most
simple graph-anonymization technique that removes the identity of
each node in the graph, replacing it with a random identification
number instead, is not adequate for preserving the privacy of
nodes. Specifically, they show that in such an anonymized network, there exists an adversary
who can identify target individuals and the link structure between them. However, the problem of designing anonymization methods against such
adversaries is not addressed in~\cite{backstrom}.

Following the work of~\cite{backstrom}, Hay et. al.~\cite{miklau}
have very recently given a definition of graph anonymity: a graph
is $k$-anonymous if every node shares the same neighborhood
structure with at least $k-1$ other nodes. The definition is
recursive, and has some nice properties studied in~\cite{miklau}.
However, the focus of~\cite{miklau} is mostly on the properties of
the definitions rather than on algorithms to achieve the anonymity requirements.

Motivated by ~\cite{backstrom} and~\cite{miklau}, Zhou and
Pei~\cite{zhou08preserving} consider the following definition of
anonymity in graphs: a graph is $k$-anonymous if for every node
there exist at least $k-1$ other nodes that share isomorphic
$1$-neighborhoods. They consider the problem of minimum
graph-modifications (in terms of edge additions) that would lead
to a graph satisfying the anonymity requirement. Although this
definition is interesting, the algorithm presented
in~\cite{zhou08preserving} is not supported by theoretical
analysis. Further, if the anonymity definition is extended to consider the neighborhood structure beyond just the immediate $1$-neighborhood of each node, algorithmic techniques quickly become infeasible.

Despite the fact that privacy concerns in releasing social-network data have
been pinpointed, there is no agreement on the definition of
privacy or anonymity that should be used for such data. In this paper, we try to move
this line of research one step forward by proposing a new definition
of graph anonymity that is inline to a certain extent with
the definitions provided in~\cite{miklau}. Our definition of
anonymity is in a sense less strict than the one proposed
in~\cite{zhou08preserving}. However, we consider it to be natural,
intuitive and more amenable to theoretical analysis. 

Intuitively our definition aims to protect an individual from an adversary who knows some subset of the individual's neighbors in the graph. After anonymization, the hope is that the adversary can no longer identify the target individual because several other nodes in the graph will also share this subset of neighbors. Further, during anonymization, the identifying subset of neighbors themselves will become distorted and harder for the adversary to identify.

\vspace{0.1in}
\noindent{\bf The Problem:} We define a graph to be
{\anonvar}-anonymous if for every node $u$ in the graph there
exist at least $k$ other nodes that share at least $\ell$ of their
neighbors with $u$. In order to meet this anonymity requirement one could transform
any graph into a complete graph. For a graph consisting of $n$
nodes this would mean that every node would share $n-2$ neighbors
with each of the $n-1$ other nodes. Although such an anonymization
would preserve privacy, it would make the anonymized graph useless
for any study. For this reason we impose the additional
requirement that the minimum number of such edge additions should
be made. The aim is to preserve the utility of the original graph,
while at the same time satisfying the {\anonvar}-anonymity
constraint.

Given $k$ and $\ell$ we formally define two variants of the
\emph{graph-anonymization} problem that ask for the minimum number
of edge additions to be made so that the resulting graph is
{\anonvar}-anonymous. We show that for certain values of $k$ and
$\ell$ the problems are polynomial-time solvable, while for others
they are NP-hard. We also present simple and intuitive
approximation algorithms for these hard instances. To summarize our contributions:

\squishlist
\item We propose a new definition of graph anonymity building on
previously proposed definitions.
\item We provide the first formal
algorithmic treatment of the graph-anonymization problem.
\squishend

\vspace{0.15in}
Besides graph anonymization, the combinatorial problems we study
here may also arise in other domains, e.g., graph reliability.
We therefore believe that the problem definitions and
algorithms we present are of independent interest.

\vspace{0.1in}
\noindent{\bf Roadmap:} The rest of the paper is organized as
follows. In Section~\ref{sec:related} we summarize the related
work. Section~\ref{sec:definitions} gives the necessary notation
and definitions. Algorithms and hardness results for different
instances of the {\anonvar}-anonymization problem are given in
Sections~\ref{2_1},~\ref{6_1to7_1},~\ref{K_1} and~\ref{K_L}. We
conclude in Section~\ref{sec:conclusions}.

\section{Related Work}\label{sec:related}

As mentioned in the Introduction, there has been some prior work
on privacy-preserving releases of social-network graphs. The authors
in~\cite{backstrom} show that the naive approach of simply masking
usernames is not sufficient anonymization. In particular, they
show that, if an adversary is given the chance to create as few as $\Theta(\log(n))$
new accounts in the network, prior to its release, then he can efficiently recover the
structure of connections between any $\Theta(\log^2 (n))$ nodes
chosen apriori. He can do so by identifying the new accounts that he inserted in
to the network. The focus of~\cite{backstrom} is on revealing the
power of such adversaries and not on devising methods to protect against them.

In~\cite{miklau} the authors experimentally evaluate
how much background information about the neighborhood of an individual would be sufficient for an adversary to uniquely identify that individual in a naively anonymized graph. Additionally, a new recursive definition of graph anonymity is given. The definition says that a graph is
$k$-anonymous if for every structure query there exist $k$ nodes
that satisfy it. The definition
is constructed for a certain class of structure queries
that query the neighborhood structure of the nodes.
Our definition of anonymity is inspired by~\cite{miklau}, however
it is substantially different. Moreover, the focus of our work is
on the combinatorial problems arising from our anonymity
definition.

Very recently, the authors of~\cite{zhou08preserving} consider yet
another definition of graph anonymity; a graph is $k$-anonymous if
for every node there exist at least $k-1$ other nodes that share
isomorphic $1$-neighborhoods. This definition of anonymity in
graphs is different from ours. In a sense it is a more strict one.
Moreover, though the algorithm presented
in~\cite{zhou08preserving} seems to work well in practice, no
theoretical analysis of its performance is presented. Finally,
extending the privacy definition to more than just the
$1$-neighborhood of nodes causes the algorithms
of~\cite{zhou08preserving} to quickly become infeasible.

The problem of protecting sensitive links between individuals in
an anonymized social network is considered in~\cite{Zheleva:07}.
Simple edge-deletion and node-merging algorithms are proposed to
reduce the risk of sensitive link disclosure. This work is
different from ours in that we are primarily interested in
protecting the identity of the individuals while
in~\cite{Zheleva:07} the emphasis is on protecting the types of links
associated with individuals. Also, the combinatorial problems that
we need to solve in our framework are very different from the set
of problems discussed in~\cite{Zheleva:07}.

In~\cite{Frikken:06} the authors study the problem of assembling
pieces of a graph owned by different parties privately. They
propose a set of cryptographic protocols that allow a group of
authorities to jointly reconstruct a graph without revealing the
identity of the nodes. The graph thus constructed is isomorphic to
a perturbed version of the original graph. The perturbation
consists of addition and or deletion of nodes and or edges. Unlike
that work, we try to anonymize a single graph by modifying it as
little as possible. Moreover, our methods are purely combinatorial
and no cryptographic protocols are involved.

Korolova et. al.~\cite{korolova} investigate an attack where an
adversary strategically subverts user accounts. He then uses the
online interface provided by the social network to gain access to
local neighborhoods and to piece them together to form a global picture. The authors provide
recommendations on what the lookahead of a social network should
be to render such attacks infeasible. This work does not
consider an anonymized release of the entire network graph and is thus different from ours.

Besides graphs, there has been considerable prior work on
anonymizing traditional relational data sets. The line of work on
$k$-anonymity found in
\cite{aggarwal,gehrke,lefevre,meyerson,sweeney,tclose} aims to
minimally suppress or generalize public attributes of individuals
in a database in such a way that every individual (identifiable by
his public attributes) is hidden in a group of size at least $k$.
Our notion of graph anonymity draws inspiration from this.

Apart from suppression or generalization techniques, perturbation
techniques have also been used to anonymize relational data sets
in \cite{dilys,haritsa,srikant}. Perturbation-based approaches for
graph anonymization are also considered in~\cite{miklau,xintao}; in that
case edges are randomly inserted or deleted to anonymize the graph. We do not consider perturbation-based approaches in this paper.

\section{Preliminaries}\label{sec:definitions}

In this section we formalize our definition of graph
anonymity and introduce two natural optimization problems
that arise from it.

Throughout the paper we assume that the social-network graph is
simple, \textit{i.e.}, it is undirected, unweighted, and contains
no self-loops or multi-edges. This is an important category of
graphs to study; most of the aforementioned social
networks (Facebook, LinkedIn, Orkut) allow only bidirectional
links and are thus instances of such simple graphs. We assume that
the actual identifiers of individual nodes are removed prior to
further anonymization. Our definition for graph anonymity is
inspired by the notion of $k$-anonymity for relational data
wherein each person, identifiable by his public attributes, is
required to be hidden in a group of size $k$. In the case of a
social-network graph, the publicly-known attributes of a user
would be (a subset of) his connections (and interconnections
amongst them) within the graph.

Consider a simple unlabelled graph and an adversary who knows that
a target individual and some number of his friends form a clique.
In the released graph, the adversary could look for such cliques
to narrow down the set of nodes that might correspond to the
target individual. The goal of an anonymization scheme is to
prevent such an adversary from uniquely identifying the individual
and his remaining connections in the anonymized graph.

We achieve this by introducing an anonymity property that requires
that for every node in the graph, some subset of its neighbors
should be shared by other nodes. In this way, an adversary
who knows some subset of the neighbors of a target individual and can even pinpoint them in the graph, will not be able to distinguish the target individual from other nodes in the network that share this subset of neighbors. Further, in the
process of anonymization, the identifying subset of neighbors
itself becomes distorted and harder for the adversary to pinpoint.
More formally we define the {\anonvar}-anonymity property as
follows.

\begin{definition}[\anonvar-anonymity]
A graph $G=(V,E)$ is {\em \anonvar-anonymous} if for each vertex
$v\in V$, there exists a set of vertices $U\subseteq V$ not
containing $v$ such that $|U|\geq k$ and for each $u\in U$ the
vertices $u$ and $v$ share at least \secondvar\ neighbors.
\end{definition}

\begin{example}
A clique of $n$ nodes is $(n-1,n-2)$-anonymous.
\end{example}

To demonstrate the kinds of attacks we hope to protect against, we give another example.

\begin{figure}
\begin{center}
\subfigure[Input graph $G$]{\includegraphics[scale=0.75,angle=270]{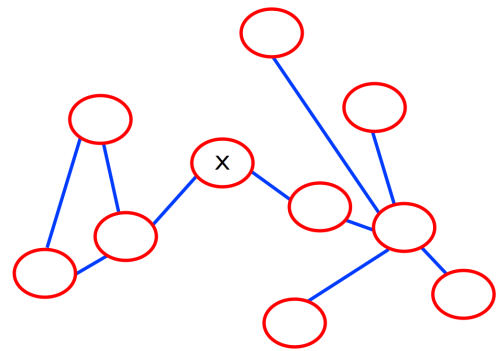}\label{41examplea}}
\subfigure[(4,1)-anonymous transformation of $G$]{\includegraphics[scale=0.75,angle=270]{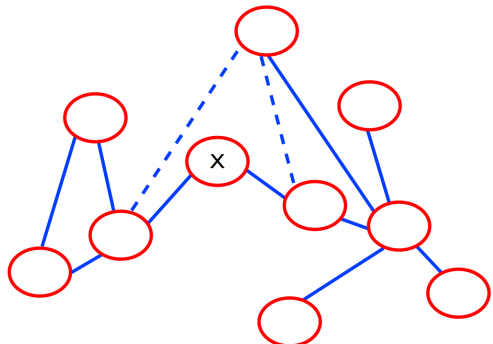}
\label{41exampleb}}
\caption{In Figure~\ref{41examplea} an adversary can identify Alice as the node marked X. Figure~\ref{41exampleb} is a (4,1)-anonymous transformation of the graph.}\vspace{-0.15in}
\end{center}
\end{figure}

\begin{example}
Consider the graph in Figure~\ref{41examplea}. Suppose an adversary knows that Alice is in this graph and that Alice is connected to a friend who is part of a triangle. There is only one such node in the graph and hence the adversary will be able to determine that the node marked X in the graph uniquely corresponds to Alice. From this he may be able to further infer the identities of Alice's neighbors and their neighbors as well. Now if the edges shown in dotted lines in Figure~\ref{41exampleb} are added to this graph, the resulting graph is $(4,1)$-anonymous. In this new graph, Alice is no longer the only node connected to a node of a triangle. Further, there is no longer only one triangle in the graph. 
\end{example}

Given an input graph $G=(V,E)$ with $n$ nodes, and integers $k$ and
$\ell$, our goal is thus to transform the graph into a
{\anonvar}-anonymous graph. We focus on transformations that
allow only additions of edges to the original graph
In order for the anonymized graph to remain useful for social-network (or
other) studies, we need to ensure that the transformed graph is as
close as possible to the original graph. We achieve this by
requiring that a minimum number of edges should be added to $G$ so that
the $(k,\ell)$-anonymity property holds. This leads us to the
following two variants of the {\anonvar}-anonymization problem.

\begin{problem}[Weak \anonvar-anonymization]\label{problem:weak}
Given a graph $G=(V,E)$ and integers $k$ and $\ell$, find the
minimum number of edges that need to be added to $E$, to obtain a
graph $G' = (V, E')$ that is \anonvar-anonymous.
\end{problem}

The following example illustrates the
weak-anonymization problem.

\begin{figure}
\begin{center}
\subfigure[Input graph $G$] {\includegraphics[scale =
0.45]{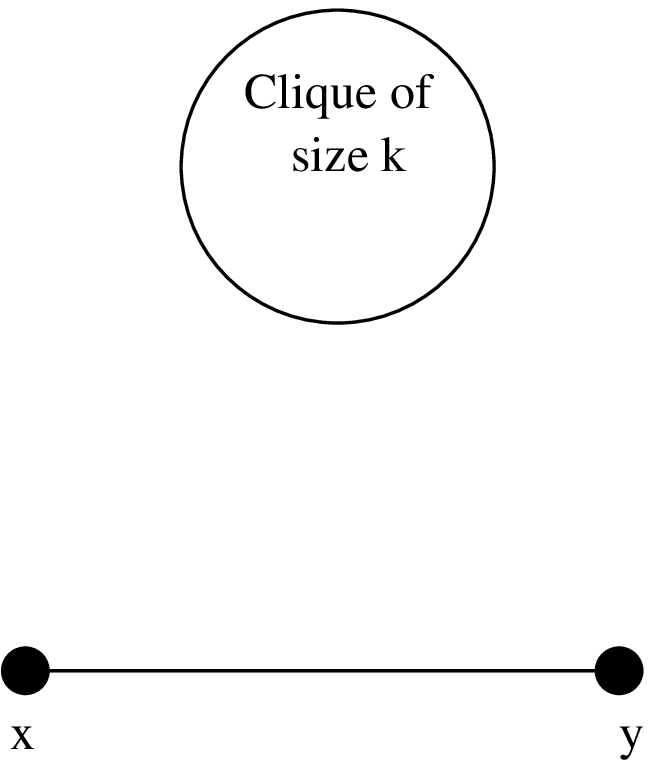}\label{fig:input_graph}} \hspace{1cm}
\subfigure[Weakly $(k-1,1)$ - anonymized graph $G'$]
{\includegraphics[scale =
0.45]{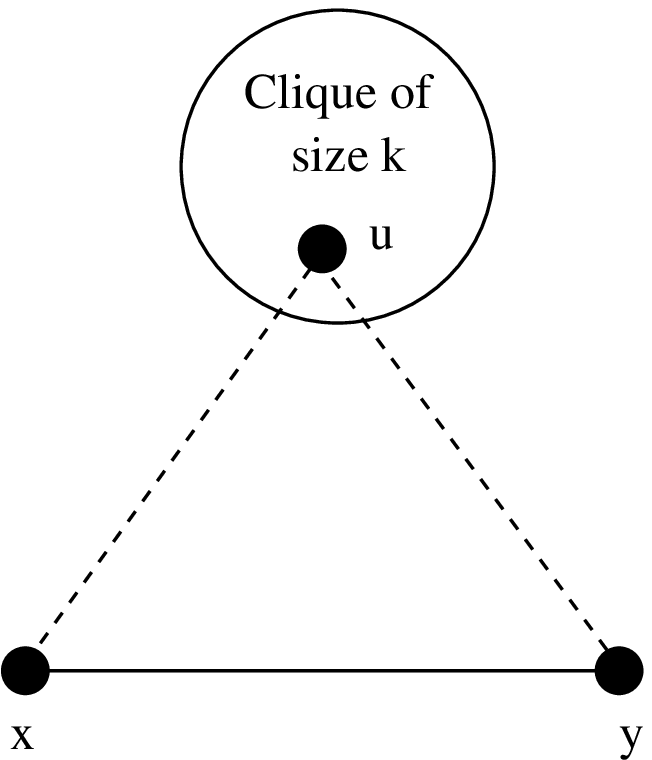}\label{fig:weak_anonymized}}\caption{Illustrative
example of the difference between \emph{weak} and \emph{strong}
anonymity. }\label{figure:example}
\end{center}
\end{figure}

\begin{example}\label{example:weak}
Consider the input graph $G$ of Figure~\ref{fig:input_graph}. The
graph consists of a clique of size $k$ and $2$ nodes $x$ and $y$
connected by an edge. The nodes in the clique are all $(k-1,
k-2)$-anonymous. However, the existence of $x$ and $y$ prevents
$G$ from being fully $(k-1, 1)$-anonymous.

Assume now that we connect both $x$ and $y$ to a single node $u$ of the clique.
In this way, we construct graph $G'$ shown in
Figure~\ref{fig:weak_anonymized}. Obviously, $G'$ is
$(k-1,1)$-anonymous; all the nodes in $G'$ (including $x$ and $y$) have
$k-1$ other nodes that share at least one of their neighbors. For
$x$ and $y$, this neighbor is node $u$.
\end{example}

The problem in the above example is that graph $G'$ satisfies the
$(k-1,1)$-anonymity requirement, however, the anonymity of nodes
$x$ and $y$ is achieved via node $u$ that was not a part of their
initial set of neighbors in $G$. Thus, the goal of having many
other nodes sharing the original neighborhood structure of $x$ or
$y$ is not necessarily achieved unless we place additional
requirements on the anonymization procedure. To this end we
introduce the problem of \emph{strong anonymization}. Strong
anonymity places additional restrictions on how anonymity can be achieved
and provides better privacy.

\begin{definition}[Strong
\anonvar-transformation]\label{dfn:strong_anonymity} Consider
graphs $G=(V,E)$ and $G'=(V,E')$, so that $E\subseteq E'$ and $G'$
is {\anonvar}-anonymous. For fixed $k$ and $\ell$, we say that
$G'$ is a \emph{strongly-anonymized transformation} of $G$, if for
every vertex $v\in V$, there exists a set of vertices $U\subseteq
V$ not containing $v$ such that $|U|\geq k$ and for each $u\in U$,
$|N_G(v) \cap N_{G'}(u)|\geq$ \secondvar. Here $N_G(v)$ is the set
of neighbors of $v$ in $G$, and $N_{G'}(u)$ is the set of
neighbors of $u$ in $G'$.
\end{definition}

Therefore, if a graph $G'$ is a strong {\anonvar}-transformation
of graph $G$, then each vertex in $G'$ is required to have $k$
other vertices sharing at least \secondvar\ of its {\em original}
neighbors in $G$. For this to be possible, every vertex must have at least $\ell$ neighbors in the original graph $G$ to begin with.

\begin{example}\label{example:strong}
Consider again the graph $G$ of Figure~\ref{fig:input_graph} and
its transformation to graph $G'$ shown in
Figure~\ref{fig:weak_anonymized}. In Example~\ref{example:weak} we
showed that graph $G'$ is $(k-1,1)$-anonymous in the weak sense.
However, in order to get a strong $(k-1,1)$-transformation
of $G$, we would have to connect each of the nodes $x$ and $y$ to $k-1$ other nodes from the clique.
\end{example}

The definition of a strong {\anonvar}-transformation gives rise to
the following \emph{strong {\anonvar}-anonymization} problem.

\begin{problem}[Strong \anonvar-anonymization]\label{problem:strong}
Given a graph $G=(V,E)$ and integers $k$ and $\ell$, find the
minimum number of edges that need to be added to $E$, to obtain
graph $G' = (V, E')$ that is a strong \anonvar-transformation of
$G$.
\end{problem}

Obviously achieving strong anonymity would require the addition of
a larger number of edges than weak anonymity. This statement is formalized as follows.

\begin{proposition}
Consider input graph $G=(V,E)$ and integers $k$ and $\ell$. Let
$G'=(V,E')$ be the $(k,\ell)$-anonymous graph that is the optimal
solution for Problem~\ref{problem:weak}, and $G''=(V,E'')$ be the
$(k,\ell)$-anonymous graph that is the optimal solution for
Problem~\ref{problem:strong}. Then it holds that $|E''|\geq |E'|$.
\end{proposition}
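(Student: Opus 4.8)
The plan is to recognize this as a feasible-region inclusion statement: I would show that every strong \anonvar-transformation of $G$ is, in particular, a \anonvar-anonymous graph, so the strong problem optimizes the same objective (number of added edges) over a subset of the feasible solutions of the weak problem. Once this inclusion is established, the conclusion is immediate: $G''$ is then a feasible solution for Problem~\ref{problem:weak}, and since $G'$ is optimal for that problem it adds the fewest edges among all feasible graphs, so $|E'|-|E|\le |E''|-|E|$, which gives $|E''|\ge|E'|$.

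The core of the argument is therefore the claim that $G''$ satisfies the (weak) \anonvar-anonymity condition, and its key ingredient is that the transformation only adds edges. Since $E\subseteq E''$, every vertex can only gain neighbors, so $N_G(v)\subseteq N_{G''}(v)$ for all $v\in V$. Now fix a vertex $v$ and take the witness set $U\subseteq V$ guaranteed by Definition~\ref{dfn:strong_anonymity}, so that $v\notin U$, $|U|\ge k$, and $|N_G(v)\cap N_{G''}(u)|\ge \ell$ for every $u\in U$. Applying the neighborhood inclusion to $v$ yields
\[
  N_G(v)\cap N_{G''}(u)\ \subseteq\ N_{G''}(v)\cap N_{G''}(u),
\]
and hence $|N_{G''}(v)\cap N_{G''}(u)|\ge |N_G(v)\cap N_{G''}(u)|\ge \ell$ for each $u\in U$. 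Thus the \emph{same} set $U$ witnesses that $v$ shares at least $\ell$ neighbors in $G''$ with each of at least $k$ other vertices, which is exactly the anonymity condition of Definition~1 for $v$. As $v$ was arbitrary, $G''$ is \anonvar-anonymous, completing the inclusion and therefore the proof.

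I do not expect a genuine obstacle here; the statement is essentially structural. The one point that must be handled with care is that the strong condition measures overlaps against the \emph{original} neighborhood $N_G(v)$ while the weak condition measures them against the \emph{transformed} neighborhood $N_{G''}(v)$, so one must verify that the strong witness set still certifies weak anonymity. That verification is precisely the containment above, which rests entirely on the monotonicity $N_G(v)\subseteq N_{G''}(v)$ coming from the edges-only restriction $E\subseteq E''$; once this is noted, nothing further is required.
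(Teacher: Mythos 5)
Your argument is correct and coincides with the reasoning the paper leaves implicit: the proposition is stated there without proof, as an immediate consequence of the fact that every feasible solution of the strong problem (Problem~\ref{problem:strong}) is also a feasible solution of the weak problem (Problem~\ref{problem:weak}), so the weak optimum cannot be larger. One remark: Definition~\ref{dfn:strong_anonymity} already requires a strong \anonvar-transformation to be \anonvar-anonymous as part of its hypothesis, so the feasible-region inclusion holds by definition alone; your monotonicity step $N_G(v)\subseteq N_{G''}(v)$, giving $N_G(v)\cap N_{G''}(u)\subseteq N_{G''}(v)\cap N_{G''}(u)$, goes slightly further by showing that this explicit anonymity clause in the definition is in fact redundant, since the strong overlap condition by itself implies weak anonymity.
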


The notion of {\anonvar}-anonymity is strongly related to the
immediate neighbors of a node in the graph, and how these are
shared with other nodes. Therefore, for every node $u$ it is
important to know the nodes that are reachable from $u$ via a path
of length exactly $2$. Given its importance, we define the notion
of $2$-neighborhood of a node as follows.

\begin{definition}[$2$-neighborhood]
Given a graph $G=(V,E)$ and a node $v\in V$ we define the $2$-neighborhood of $v$ to be the set of all nodes in $G$ that are
reachable from $v$ via paths of length exactly $2$.
\end{definition}

We also define two more terms that will be used in the rest of the paper.

\begin{definition}[Residual Anonymity]\label{dfn:residual}
Consider a graph $G=(V,E)$ that we would like to make
$(k,\ell)$-anonymous. Consider any node $v \in V$ and suppose that
$k'$ other nodes in the graph share at least $\ell$ of $v$'s
neighbors. Then, we define the residual anonymity of $v$ to be
$r(v) = {\text max}\{k-k', 0\}$. The residual anonymity of a graph
$G=(V,E)$ is defined to be $r(G) = \sum_{v \in V} r(v)$.
\end{definition}

We define the concept of a deficient node for nodes that are not $(k,\ell)$-anonymous.

\begin{definition}[Deficient Node]
A node $v$ is deficient if $r(v) > 0$.
\end{definition}

It is the deficient nodes that we need to take care of in order to
anonymize a graph. With these definitions in hand, we are now ready
to proceed to the technical results of the paper.

\section{$(2,1)$-anonymization}\label{2_1}

In this section we provide polynomial-time algorithms for the weak
and strong $(2,1)$-anonymization problems. First, it is easy to
see that there is a simple characterization of $(2,1)$-anonymous
graphs. This fact is captured in the following proposition.

\begin{proposition}\label{prop:characterization}
A graph $G=(V,E)$ is $(2,1)$-anonymous if and only if each vertex
$u\in V$ is (a) part of a triangle, (b) adjacent to a vertex of
degree at least 3, or (c) is the middle vertex in a path of 5 vertices.
\end{proposition}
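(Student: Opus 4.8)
The plan is to translate the $(2,1)$-anonymity condition into a statement about $2$-neighborhoods and then show that (a), (b), (c) enumerate exactly the ways a vertex can have two distinct vertices in its $2$-neighborhood. Specializing Definition~1 to $k=2,\ell=1$, a vertex $v$ is anonymized if and only if there are at least two vertices $u\neq v$ each sharing a common neighbor with $v$; and $u$ shares a neighbor with $v$ exactly when $u$ lies on a path of length $2$ from $v$, so this says $|N_2(v)|\geq 2$, where $N_2(v)$ denotes the $2$-neighborhood of $v$. Thus $G$ is $(2,1)$-anonymous iff every vertex $v$ has $|N_2(v)|\geq 2$, and it suffices to prove, for a single vertex $v$, that the disjunction of (a), (b), (c) is equivalent to $|N_2(v)|\geq 2$; both directions of the proposition then follow by quantifying over all vertices.

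For the ($\Leftarrow$) direction I would simply exhibit two distinct elements of $N_2(v)$ from each hypothesis. If $v$ lies on a triangle with vertices $v,w_1,w_2$, then $w_1,w_2\in N_2(v)$. If $v$ is adjacent to a vertex $w$ with $\deg(w)\geq 3$, then the two neighbors of $w$ other than $v$ lie in $N_2(v)$ and are distinct. If $v$ is the center of a $5$-vertex path $a - w_1 - v - w_2 - b$, then $a,b\in N_2(v)$ with $a\neq b$. In every case $|N_2(v)|\geq 2$.

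The substantive direction is ($\Rightarrow$), which I would prove by contrapositive: assuming $v$ satisfies none of (a), (b), (c), I will show $|N_2(v)|\leq 1$. Failing (a) means $N_G(v)$ is an independent set; failing (b) means every neighbor $w$ of $v$ has $\deg(w)\leq 2$, hence at most one neighbor $a_w$ besides $v$. Independence of $N_G(v)$ forces each such $a_w$ to lie outside $N_G(v)\cup\{v\}$ (otherwise $w$ and $a_w$ would be adjacent neighbors of $v$, a triangle), so $N_2(v)=\{\,a_w : w\in N_G(v),\ \deg(w)=2\,\}$. It then remains to rule out two distinct elements: if $a_{w_1}\neq a_{w_2}$ for neighbors $w_1\neq w_2$, the vertices $a_{w_1}, w_1, v, w_2, a_{w_2}$ form a path centered at $v$, contradicting the failure of (c); hence $|N_2(v)|\leq 1$.

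The main obstacle is the distinctness bookkeeping in this last step: one must verify that $a_{w_1}, w_1, v, w_2, a_{w_2}$ really are pairwise distinct, and this is precisely where all three negated hypotheses are used simultaneously — independence of $N_G(v)$ prevents each $a_{w_i}$ from coinciding with a neighbor of $v$ (so $a_{w_1}\neq w_2$ and $a_{w_2}\neq w_1$) and prevents $w_1,w_2$ from being adjacent, while the degree bound guarantees each $a_{w_i}$ is well defined and unique. The low-degree configurations ($\deg(v)\leq 1$, or neighbors of degree $1$) need no separate treatment: a vertex with at most one neighbor, or whose neighbors have no second endpoint, contributes at most one element to $N_2(v)$ and so is subsumed by the same counting. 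Beyond this finite case analysis I expect no further difficulty.
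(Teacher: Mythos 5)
Your proof is correct. Note that the paper itself offers no proof of this proposition (it is introduced only with ``it is easy to see''), so there is no argument to compare against; your reduction of $(2,1)$-anonymity to the condition that every vertex $v$ has at least two distinct vertices in its $2$-neighborhood, followed by the contrapositive argument that a vertex failing (a), (b), and (c) has at most one such vertex, completely fills that gap. The distinctness bookkeeping you single out -- using independence of $N_G(v)$ (from the failure of (a)) and the degree bound on its members (from the failure of (b)) to show that $a_{w_1}, w_1, v, w_2, a_{w_2}$ are pairwise distinct, so that the failure of (c) is genuinely contradicted -- is exactly the crux, and you handle it correctly.
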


The main idea of the algorithms that we develop for
$(2,1)$-anonymization is that they add the minimum number of edges
so that every vertex of the resulting graph satisfies one of the
conditions of Proposition~\ref{prop:characterization}. Both
algorithms proceed in two phases: the \emph{deficit-assignment}
and the \emph{deficit-matching} phase. The deficit assignment
requires a linear scan of the graph in which deficits are assigned
to vertices. Roughly speaking, a deficit of $1$ signifies that the vertex needs to be
connected to another vertex of non-zero deficit by the addition of an extra edge. 
This added edge ensures that the $(2,1)$-anonymity requirement for the vertex or its
neighbors will be satisfied. Once the deficits are assigned to vertices
the algorithms proceed to the actual addition of edges. The edges
are added by taking into account the deficits of all vertices. For
example, two vertices both of deficit $1$ can be connected by the addition of a
single edge (if they are not already neighbors and are not
isolated). In this way, a single edge accommodates a total
deficit of $2$. The minimum number of edges to be added can be found via a matching of the vertices with deficits. The matching consists of edges that are not already in the graph. A perfect matching is the matching that satisfies all the deficits. In the case of weak anonymization, this matching can be found in linear time by randomly pairing up non-adjacent vertices with deficits. For strong anonymization, it needs to be explicitly computed by solving the maximum-matching problem over edges that are not already in the graph.

Another key point in the development of our algorithms is that in
order to assign deficits it suffices to explore only vertices that
are within a distance $4$ from some leaf vertex or from a vertex
of degree $2$. Any other vertex can be shown to satisfy the conditions of Proposition~\ref{prop:characterization}.
Finally, it only requires a case analysis to show that our
algorithms optimally assign deficits to vertices,
independently of the order in which they traverse the vertices of
the input graph during the first phase. For lack of space we only
give a sketch of the algorithms and proofs in this section.

\subsection{Linear-time weak $(2,1)$-anonymi-zation}
As we have already mentioned our algorithm for the weak
$(2,1)$-anonymization problem has two phases (1) deficit
assignment and (2) deficit matching\footnote{Recall that a node $u$ is assigned deficit $i$ if $i$ edges need to be added between other non-zero deficit vertices and $u$ in order to satisfy the anonymity requirements of $u$ or $u$'s neighbors.}

\vspace{0.1in}
\noindent{\bf Deficit Assignment:} First assume that the input graph has no
isolated vertices -- we will show how to deal with isolated
vertices later. For the deficit-assignment phase, the algorithm
starts with an {\em unmarked} vertex of
degree $1$ or $2$ and explores vertices within a distance $4$ of it. Deficits are assigned as follows:

\begin{itemize}
\item For an isolated edge $uv$, we assign deficit $1$ to $u$ and
deficit $1$ to $v$; it may be that both edges will be added at $u$.

\item For an isolated path $uvw$, we assign deficit $1$ to $v$.

\item For an isolated path $uvwx$, we assign deficit $1$ to $v$ and
deficit $1$ to $w$.

\item For a subgraph consisting of a path $uvw$ with adjacent
vertices attached to $w$, we assign deficit $1$ to $v$.

\item For a component $uvX_i$ with vertex $u$ having degree one
with vertex $v$ connected to a set of vertices $X_i$ such that
each $x\in X_i$ has degree $1$ (and no other vertices) assign
deficit $1$ to $v$. This component corresponds to an isolated star
centered at $v$.

\item For a component consisting of a square $uvwx$ (isolated
square), we assign deficit $1$ to $u$ and deficit $1$ to $w$; it may
be that the two edges will be added at $u$ and $v$, or that $u$
and $w$ will be joined.

\item For a subgraph consisting of a square $uvwx$ with edges (one
or more) $ux_i$ coming out of the square, we assign deficit $1$ to
$v$.

\item For a subgraph consisting of squares $uv_1wx_1$,
$uv_2wx_2$, $\ldots$, $uv_jwx_j$, we assign deficit
$1$ to one of the $v_i$'s.

\item Finally, for a subgraph consisting of a vertex $u$ adjacent
to vertices $x_i$ of degree $1$ and to a vertex $y$ of degree $2$,
assign deficit $1$ to $y$.

\end{itemize}

All the vertices that are visited in this process are {\em marked}
(that is the assigned deficits cover all marked vertices) and the
deficit-assignment process repeats starting with the next
unmarked vertex until no more unmarked
vertices of degree $1$ or $2$ remain.

\vspace{0.1in}
\noindent{\bf Deficit Matching:} If the
number of vertices with deficit $1$ is $2m$, and $2m\geq 4$ or
$2m=2$ -- in some case other than an isolated edge $uv$ -- then, we
need to find any perfect matching amongst these
vertices to find the edges to add. 
The matching of deficits can be done in linear time since any
(random) pairing of non-adjacent vertices with non-zero deficits suffices. In this case we add $m$ extra edges. 
If the number of vertices with deficit $1$ is $2m+1$, then all but one of
these vertices can be matched, and a single edge needs to be added
to the remaining vertex, connecting it to some vertex of degree
at least $2$. This results in a total of $m+1$ extra edges. 
There are, however, some special cases that we need to take care of first.

\vspace{0.1in}
\noindent{\bf Special Cases:} Before finding the perfect matching we match all
isolated edges to each other. This is because the isolated edges
need to be connected in a special way to take care of the deficits
at the two ends. For a pair of isolated edges $uv$ and $u'v'$, we add
the edges $uu'$ and $vu'$ (we treat the two deficits of $1$ at $u$
and $v$ as being concentrated at $u$). In the end we may be left
with a single isolated edge $uv$. In this case, two edges need to be added
and we can connect them to any other vertex in the graph
forming a triangle.  Similarly, in the case where
the remainder is an isolated star centered at $v$ with vertices
$x_i$ of degree one, it is enough to add a single edge to
connect vertices $x_j$ and $x_{j'}$ of the star.

\vspace{0.1in}
\noindent{\bf Isolated Vertices:} It remains to take care of isolated
vertices. For this we consider a set of six isolated vertices
$u,v,w,u',v',w'$  and we connect them with edges $uv, uw, uu',$
$u'v', u'w'$. These five edges can take care of the six isolated
vertices. In general, the vertices with deficit $1$ can be attached
to isolated vertices first, with two exceptions to be considered
next. When we have an isolated edge $xy$, one of the two deficits
of 1 can be satisfied by connecting $x$ to an isolated vertex, but
the other one can also be satisfied by connecting $x$ to an
isolated vertex $u$ if $u$ is also made adjacent to two other
isolated vertices $v$ and $w$ to obtain the above mentioned
component. Similarly if $x$ is only adjacent to vertices $y_i$ of
degree $1$, then the deficit $1$ at $x$ can only be matched to an
isolated $u$ if $u$ is also made adjacent to two other isolated
vertices $v$ and $w$. In the end we will be left with fewer than
six isolated vertices which each need one edge. These can be
connected to any vertex in the graph of degree at least $2$. The
optimality follows because a tree on $5$ vertices is optimal saving.

\begin{theorem}
The above algorithm solves optimally the weak
$(2,1)$-anonymization problem in linear time.
\end{theorem}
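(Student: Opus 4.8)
The plan is to prove three things in turn: that the output graph is $(2,1)$-anonymous (correctness), that no fewer edges can suffice (optimality), and that the procedure runs in linear time. Correctness is driven entirely by Proposition~\ref{prop:characterization}: it suffices to show that after the matching phase every vertex is part of a triangle, is adjacent to a vertex of degree at least $3$, or is the middle of a $5$-path. First I would establish a \emph{localization lemma}. The key observation is that if a vertex $u$ is deficient, then it lies in no triangle and all of its neighbors have degree at most $2$ (otherwise condition (a) or (b) of Proposition~\ref{prop:characterization} would already hold); hence every deficient vertex has a degree-$\le 2$ vertex in its immediate neighborhood. An exploration of radius $4$ around each unmarked degree-$1$ or degree-$2$ vertex therefore encounters every deficient vertex together with enough surrounding structure to classify it into exactly one of the finite list of local patterns (isolated edges, paths, stars, squares, and their small attachments) enumerated in the deficit-assignment phase. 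In particular, any vertex at distance greater than $4$ from all low-degree vertices is never touched.

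Given the localization lemma, correctness reduces to a finite case check. For each configuration in the deficit-assignment list I would verify that placing the edges prescribed by the matching makes every previously deficient vertex in that configuration satisfy a condition of Proposition~\ref{prop:characterization}. For instance, matching two deficit vertices creates a new edge whose endpoints either complete a triangle or raise a neighbor's degree to $3$; the star and isolated-edge special cases are handled by the dedicated edge placements (connecting two leaves of a star, or concentrating both deficits of an isolated edge at one endpoint). Because the assignment rules cover every local pattern and the matched edges are chosen among non-adjacent pairs, no vertex is left deficient.

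The heart of the argument is optimality, and I expect this to be the main obstacle. Here I would show that any $(2,1)$-anonymous supergraph $G'\supseteq G$ must add at least as many edges as the algorithm. The natural device is a charging/lower-bound argument: each newly added edge contributes at most two endpoints, so if $D$ denotes the total assigned deficit, any solution that ``covers'' $D$ units of deficit must use at least $\lceil D/2\rceil$ edges, which the matching phase meets exactly (a perfect matching uses $m$ edges for $2m$ deficits, with one extra edge when the count is odd). Two subtleties must be dealt with carefully. The first is the \emph{independence of traversal order}, which I would prove by showing that each local deficit is \emph{forced} — no local configuration can be anonymized with fewer new incident endpoints than the rule assigns — so that the local optima compose into a global optimum no matter how the scan reaches each component. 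The second is the \emph{shared-edge cases}, where a single edge simultaneously discharges deficits in two places; the isolated-vertex analysis is the tightest of these, since six isolated vertices can be anonymized with only five edges, and the remark ``a tree on $5$ vertices is optimal saving'' is precisely the statement that this edge-to-vertex ratio cannot be improved, which I would verify by checking that no sparser attachment anonymizes the same set.

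Finally, linear time follows immediately from localization: each deficit-assignment step is a bounded-radius ($\le 4$) exploration started from an unmarked degree-$1$ or degree-$2$ vertex, and marking guarantees that each vertex is visited only a constant number of times, so the total work is $O(|V|+|E|)$. The deficit-matching phase is a random pairing of non-adjacent deficit vertices and is likewise linear, and tallying the added edges yields the optimal count.
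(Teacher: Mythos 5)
Your proposal follows essentially the same route as the paper's own proof sketch: a case analysis establishing that the deficit assignment is complete, forced (hence a valid lower bound), and independent of traversal order; optimality of the matching phase via the $\lceil D/2\rceil$ edge count; and linear time from the radius-$4$ exploration around degree-$1$ and degree-$2$ vertices combined with marking. The paper likewise defers the exhaustive case analysis (omitting it for lack of space), so your outline matches its proof in both structure and level of detail.
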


{\em Proof Sketch:} It requires a case analysis (that we omit for lack of space) to show
that the deficit-assignment scheme we described above is complete and optimal
and that the total deficit assigned is independent of the order in
which the vertices of the graph are traversed. Since we find a
perfect matching, we satisfy these deficits with as few edges as
possible, hence, the optimality of the algorithm.

It is also easy to see that the deficit-assignment takes time
linear with respect to the number of edges in the graph: first we
only consider vertices of degree one or two as starting points. For every such vertex we only have to explore all
vertices within a distance $4$. This is because any other vertex can be seen to satisfy one of the conditions of Proposition~\ref{prop:characterization}. After each iteration of the deficit assignment, we
mark all the vertices that have been visited in this process as
marked (that is the assigned deficits cover all visited
vertices). The deficit-assignment process continues starting
with the next unmarked vertex of degree $1$ or $2$. The scanning of
the algorithm requires only linear time with respect to the number
of edges in the graph since every traversed edge connects only
marked endpoints and thus no edge needs to be traversed more than once
by the algorithm.

The deficit-matching phase is also linear since it only requires to
find any (random) matching between non-adjacent deficits.

\subsection{Polynomial-time strong $(2,1)$-anonymization}

The algorithm for solving the strong $(2,1)$-anonymization problem
is very similar to the one presented in the previous section, so
we only briefly discuss it here. For brevity we avoid mentioning
various special cases that are similar to the weak-anonymization
problem. The first key difference is that for strong
$(2,1)$-anonymization we need to develop a different
deficit-assignment scheme. Although the actual structures we have
to consider for assigning the deficits are the same we need to
assign different deficits to different vertices so that we satisfy
the strong anonymity requirement. This is because an edge added at
a vertex with assigned deficit can only help the original
neighbors of the vertex, and not the vertex itself. The second
difference is that in the deficit-matching phase we need to
actually solve a maximum-matching problem; not every random pairing of non-adjacent vertices with assigned deficit is a valid solution.

In strong $(2,1)$-anonymization we first have to assume that there
are no isolated vertices in the input graph $G$; otherwise strong
$(2,1)$-anonymity is not achievable for these vertices.

\vspace{0.1in}
\noindent{\bf Deficit Assignment:} For the deficit-assignment step, the
algorithm starts with an unmarked vertex in the input graph
with degree $1$ or $2$ and assigns deficits as follows:

\begin{itemize}
\item For an isolated edge $uv$, assign deficit of $2$ at each end.

\item For an isolated path $uvwx$, put deficit $1$ at $v$ and at
$w$.

\item For an isolated square $uvwx$, put deficit $1$ at $u$ and
$v$.

\item If such a square has edges already coming out of $v$, put
just deficit $1$ at $u$.

\item If multiple squares $uv_iwx_i$ all start from vertex $u$,
then assign deficit $1$ to one of the $v_i$'s.

\item For a path $uvw$, put deficit $1$ at each of the $3$ vertices.

\item For a vertex of degree at least $3$ attached to vertices of
degree $1$, put two deficits of $1$ at degree $1$ vertices.

\item If a path starts $uvwx$, with $x$ of degree at least $2$, put
deficit $1$ at $v$ and $1$ at $w$.

\item If in addition $w$ has other edges coming out of it, put
deficit $1$ just at $v$. Otherwise if in addition only $v$ has other
edges coming out of it that join to a vertex of degree $1$, put
deficit $1$ just at $w$.

\end{itemize}

All vertices that are visited in the process are marked, and the
algorithm proceeds with the next unmarked vertex until there are
no unmarked vertices left.

\vspace{0.1in}
\noindent{\bf Deficit Matching:} For solving the strong $(2,1)$ -
anonymization problem exactly we need to solve a maximum-matching
problem between the nodes with deficits. This can be done in polynomial
time~(\cite{papadimitrioucombinatorial}). Note, that in the weak
$(2,1)$-anonymization problem \emph{any} random pairing of
non-adjacent nodes with deficits was sufficient, allowing for a linear-time matching phase.
This was because with the exception of isolated edges and isolated paths of length $4$, there was no case in which two vertices of non-zero deficit could be adjacent. This is not the case in the strong anonymization problem, and here a maximum-matching problem needs to be solved over edges that are not already in the graph.

A linear-time deficit-matching algorithm with a small
additive error can also be developed. This is summarized in the
following theorem.

\begin{theorem}
The strong $(2,1)$-anonymization problem can be approximated in
linear time within an additive error of 2, and can be solved
exactly in polynomial time.
\end{theorem}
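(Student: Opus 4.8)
The plan is to establish both claims by treating the two phases---deficit assignment and deficit matching---separately and then summing their costs. First I would argue that the strong deficit-assignment scheme listed above is \emph{complete and optimal}. Completeness follows from the same locality observation as in the weak case: every vertex violating all three conditions of Proposition~\ref{prop:characterization} lies within distance $4$ of a vertex of degree $1$ or $2$, so the enumerated local structures cover all deficient vertices. Optimality is a bounded case analysis over those structures; since the catalog of structures is identical to the weak case and differs only in the assigned deficit \emph{values}---reflecting that an edge added at a vertex $v$ can only help $v$'s \emph{original} neighbors and not $v$ itself---I would re-run the weak $(2,1)$ optimality argument with the strong deficit values. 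This yields, in linear time, a deficit function $d(\cdot)$ on $V$ whose total $\sum_v d(v)$ is exactly the minimum number of new edge-endpoints any strong transformation must spend.

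For the \textbf{exact polynomial-time} claim I would phrase the matching phase as a degree-constrained edge-addition problem on the complement: choose a set $M$ of non-edges of $G$ so that each vertex $v$ gains at least $d(v)$ incident edges of $M$, minimizing $|M|$. Each edge of $M$ discharges exactly one unit of deficit at each of its two endpoints, so $|M|\ge \lceil \tfrac12\sum_v d(v)\rceil$, with equality precisely when the deficits can be paired using available non-edges. Deciding this and producing an optimal $M$ is a maximum-matching / degree-constrained-subgraph problem, solvable in polynomial time by Edmonds' algorithm~\cite{papadimitrioucombinatorial}; combined with the linear deficit-assignment phase, this gives an optimal polynomial-time algorithm.

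For the \textbf{linear-time additive-$2$} claim I would replace the exact matching by the greedy pairing used in the weak case, first disposing of the special structures (pairs of isolated edges, isolated stars, a leftover isolated edge, isolated vertices) exactly as before, and then randomly pairing the remaining non-adjacent deficit vertices. The only two ways greedy pairing can fall short of optimal are (i) a \emph{parity leftover}, when $\sum_v d(v)$ is odd, costing one extra edge that connects the stray deficit to a vertex of degree at least $2$; and (ii) an \emph{adjacency obstruction}, when the last unpaired deficit vertices are mutually adjacent in $G$ and so cannot be joined by a new edge. I would show that, outside the explicitly handled special structures, non-zero-deficit vertices are mutually non-adjacent, so any greedy pairing of them is valid and at most one residual adjacency obstruction can survive to the end; resolving it costs one further edge. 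Summing (i) and (ii) bounds the overshoot by $2$.

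The step I expect to be the genuine obstacle is controlling the adjacency obstruction in (ii): I must verify, structure by structure through the deficit-assignment catalog, that the only configurations in which two non-zero-deficit vertices are adjacent are exactly those already routed through the special-case handling, so that after that handling at most a single adjacent deficit pair can remain. By contrast, the optimality of the deficit assignment and the polynomiality of the exact matching are comparatively routine once the catalog is accepted.
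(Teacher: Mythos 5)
Your overall architecture matches the paper's: a locally computed deficit assignment whose optimality is established by case analysis, an exact phase solved by maximum matching over non-edges (polynomial time), and a greedy pairing phase for the linear-time additive-2 bound. The exact polynomial-time half of your argument is essentially the paper's. The gap is in the additive-2 half, and it is the step you yourself flagged: the lemma you plan to prove --- that outside the explicitly handled special structures, non-zero-deficit vertices are mutually non-adjacent --- is \emph{false} for the strong deficit-assignment scheme. The paper states this explicitly when motivating the matching phase: the reason random pairing sufficed for weak $(2,1)$-anonymization is that (apart from isolated edges and isolated paths on four vertices) no two deficit vertices were ever adjacent, and ``this is not the case in the strong anonymization problem.'' Indeed, the strong catalog routinely places deficits on adjacent vertices: an isolated edge $uv$ gets deficit $2$ at each end; an isolated path $uvwx$ gets deficits at the adjacent vertices $v$ and $w$; an isolated square gets deficits at adjacent $u$ and $v$; a path $uvw$ gets deficits at all three consecutive vertices; a path $uvwx$ with $\deg(x)\geq 2$ gets deficits at adjacent $v$ and $w$. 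None of these configurations is covered by the weak-case special handling you propose to reuse (pairs of isolated edges, isolated stars, a leftover isolated edge, isolated vertices), so your claim (ii) cannot be discharged structure-by-structure as planned --- adjacency of deficit vertices is the generic situation here, not the exception.

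The bound is still recoverable, but by a different observation, which is the one implicit in the paper's proof: adjacency among deficit vertices can only occur \emph{within} a single local structure, since distinct structures identified in the scan are vertex-disjoint and far apart; deficit vertices belonging to different structures are never adjacent. Hence a greedy pairing that always matches deficits across structures can proceed as long as deficits survive in at least two structures, and it can become blocked only at the very end, when all remaining deficits sit on mutually adjacent vertices of one structure. Satisfying those last deficits individually (one new edge per deficit, to an arbitrary non-neighbor) costs at most $2$ edges beyond the matching lower bound of $\lceil m/2\rceil$, which is exactly the paper's additive error of $2$. In short: do not try to rule out adjacent deficit pairs --- instead show the blockage they cause can happen only once, at termination.
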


{\em Proof Sketch:} It requires again a case analysis to
show that the deficit-assignment scheme is optimal and independent
of the order in which we traverse the vertices.

Now, if all deficits add up to $m$, they can easily be paired
using a greedy linear-time matching algorithm. However, 
the last $2$ deficits may be assigned to adjacent
vertices. So instead of adding $\lceil m/2\rceil$ edges, we may
add $\lceil m/2\rceil+2$, for an additive error of 2. If instead
we use a maximum-matching algorithm to match as many deficits
as possible and satisfy the unmatched deficits individually, the
problem can be solved optimally in polynomial time.

\vspace{-0.15in}
\section{From $(6,1)$ to $(7,1)$-anonymity}\label{6_1to7_1}
We show here that given a graph that is already (6,1)-anonymous,
it is NP-hard to find the minimal number of edges that need to be
added to make it either weakly or strongly (7,1)-anonymous. This result provides insight into the complexity of the anonymization problem, showing that it is hard to achieve anonymity even incrementally. The
result follows from a reduction from the {\em 1-in-3
satisfiability} problem. An instance of 1-in-3 satisfiability
consists of triples of Boolean variables $(x,y,z)$ to be assigned
values 0 or 1 in such a way that each triple contains one 1 and
two 0s. This problem was shown to be NP-complete by
Schaefer~\cite{schaefer78complexity}. We first show that even a
restricted form of the 1-in-3 satisfiability problem is
NP-complete.

\begin{lemma}\label{lemma1}
The 1-in-3 satisfiability problem is NP-complete even if each variable
occurs in exactly 3 triples, no two triples share more than one variable, and
the total number of triples is even.
\end{lemma}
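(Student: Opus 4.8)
The plan is to reduce from the (unrestricted, positive) 1-in-3 satisfiability problem, which is NP-complete by Schaefer~\cite{schaefer78complexity}; membership of the restricted problem in NP is immediate, since a satisfying assignment is checkable in polynomial time. Given an arbitrary positive 1-in-3 instance $\phi$, I would transform it in polynomial time into an instance $\phi'$ that (i) has every variable in exactly three triples, (ii) has no two triples sharing more than one variable, and (iii) has an even number of triples, in such a way that $\phi'$ is satisfiable iff $\phi$ is. Each of the three constraints is enforced by a local, satisfiability-preserving gadget substitution, applied in sequence.

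The central tool is a positive \emph{equality gadget} that forces two variables to agree. On fresh auxiliaries $a,b,c,d,e$, the four triples $(x,a,b)$, $(y,c,d)$, $(a,c,e)$, $(b,d,e)$ have the property that every satisfying 1-in-3 assignment sets $x=y$, and both $x=y=0$ and $x=y=1$ extend to a satisfying assignment of the gadget; this is a finite truth-table check (for instance, $x=1$ forces $a=b=0$, hence $c=d$ from the last two triples, which is incompatible with $y+c+d=1$ when $y=0$). Crucially, any two of these triples share at most one variable, so the gadget is itself compatible with constraint~(ii). Using it, I would enforce constraint~(i) by \emph{replication}: a variable occurring $d$ times is replaced by $d$ private copies, one per occurrence, and the copies are tied together in a cycle of equality gadgets forcing all copies equal; a short accounting then shows each copy appears in exactly three triples (its own occurrence plus the two adjacent gadgets). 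Variables left with fewer than three occurrences, including the gadget auxiliaries, are brought up to exactly three by attaching them to a small pool of \emph{padding} triples drawn from a self-contained, always-satisfiable block whose every variable already has degree three.

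Constraint~(ii) is then enforced by \emph{splitting}: whenever two triples still share two variables, I replace one shared variable in one of them by a fresh copy and reconnect it with the equality gadget above, reducing the overlap to a single variable without changing satisfiability; since the equality gadget never itself creates a double overlap, the process terminates. Finally, constraint~(iii) is handled by a parity fix: if $\phi'$ has an odd number of triples, I append one copy of a self-contained satisfiable block of the appropriate parity, keeping every variable at degree three and every pair of triples sharing at most one variable. The whole transformation inflates the instance by only a polynomial (indeed linear) factor, and every step is satisfiability-preserving in both directions, giving the desired reduction.

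The step I expect to be the main obstacle is the \emph{simultaneous} satisfaction of all three structural constraints: the equality/copy gadget must at once (a) implement the intended equality semantics in positive 1-in-3 logic, (b) never place two triples in more than one common variable, and (c) permit every auxiliary to be padded to exactly three occurrences, all without introducing spurious satisfying assignments or destroying existing ones. Verifying that the four-triple gadget above meets (a)--(c), and that the replication, splitting, padding, and parity steps compose without reintroducing a violated constraint, is where the care lies; once these finite gadget verifications are in place, the correctness and polynomiality of the reduction follow routinely.
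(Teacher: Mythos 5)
Your reduction skeleton and your four-triple equality gadget $(x,a,b)$, $(y,c,d)$, $(a,c,e)$, $(b,d,e)$ are exactly those of the paper (there written $(x_i,u,v)$, $(x_{i+1},u',v')$, $(u,u',w)$, $(v,v',w)$), and your verification of its semantics and of the one-variable-overlap property is correct. The genuine gap is in the step you yourself flag as the main obstacle: bringing every deficient variable (in particular the five auxiliaries of each gadget, which have degree two) up to degree \emph{exactly} three. Your proposed fix --- attaching each such variable ``to a small pool of padding triples drawn from a self-contained, always-satisfiable block whose every variable already has degree three'' --- cannot work as stated: a triple contains exactly three variables, so the only way to attach a deficient variable to the block is to create a new triple containing it together with block variables, and that pushes those block variables to degree four, re-violating the constraint you are trying to establish. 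Nor can you instead design an ``absorber'' gadget with a single degree-one port and all other variables of degree three: summing degrees in such a gadget gives $3T = 1 + 3(V-1) = 3V-2$, which is impossible modulo $3$. Deficient variables can therefore only be repaired in coordinated groups whose total deficiency is a multiple of three, and a construction realizing that coordination is exactly what your proposal lacks.

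The paper supplies this missing construction. After renaming occurrences (which already yields your constraint (ii)) and inserting the equality gadgets, every variable has degree at most three; variables of degree two are raised to degree three by adding a triple $(y,z,t)$ with fresh variables $z,t$, which however leaves $z$ and $t$ at degree one. The key trick is then to take \emph{nine} copies of the entire instance, indexed by $(i,j)$ with $1\le i,j\le 3$, and to identify the $z$'s having the same $i$ and the $t$'s having the same $j$: each identified variable then lies in exactly three triples; two padding triples share $z$ only when they lie in the same row and $t$ only when they lie in the same column, so no two triples share more than one variable; and satisfiability is preserved because the nine copies are identical, so any satisfying assignment of one copy replicates consistently. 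Finally, parity is obtained by simply doubling the instance, which is more economical than your appeal to a self-contained block with an odd number of triples (such blocks do exist, but you would have to exhibit one). Without an analogue of this copying-and-identification argument, your reduction does not establish the lemma.
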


\begin{proof}
We prove this by taking an arbitrary instance of the 1-in-3 satisfiability problem and converting it to an instance satisfying the constraints of the above lemma. We start off by renaming multiple occurrences of a variable $x$ as $x_1$, $x_2$, and so on, so that by the end, each variable occurs in at most 1 triple and no two triples share more than one variable. We can then enforce the condition that each $x_i$ be equal to $x_{i+1}$ by inserting the triples
$(x_i,u,v)$, $(x_{i+1},u',v')$, $(u,u',w)$ and $(v,v',w)$.
This guarantees at most 3 occurrences of
each variable in triples. If a variable $y$ occurs in 2 triples, we may include
a triple $(y,z,t)$ introducing two new variables, so that at the end of this process each variable occurs in either
1 or 3 triples. Finally we make nine copies of the entire instance, each labeled $(i,j)$ with
$1\leq i,j\leq 3$, and equate the $z$s that have the same $i$ and also equate
the $t$s that have the same $j$. This guarantees that each variable appears
in exactly 3 triples. Making two copies of this instance guarantees that the
number of triples is even.
\end{proof}

\begin{theorem}
Suppose $G$ is $(6,1)$-anonymous. Finding the smallest set of
edges to add to $G$ to solve the weak or strong
$(7,1)$-anonymization problem is NP-hard. The same results hold
for going from $(k,1)$-anonymity to weak or strong $(k+1,
1)$-anonymity when $k \geq 6$.
\end{theorem}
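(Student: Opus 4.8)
The plan is to reduce from the restricted 1-in-3 satisfiability problem of Lemma~\ref{lemma1}, in which each variable occurs in exactly $3$ triples, no two triples share more than one variable, and the total number of triples is even. Given such an instance $\phi$, I would construct in polynomial time a graph $G$ that is already $(6,1)$-anonymous, together with a target budget $B$, so that $G$ can be made $(7,1)$-anonymous (weakly or strongly) by adding at most $B$ edges if and only if $\phi$ has a satisfying assignment. The guiding intuition is to let each \emph{triple} correspond to a gadget whose residual anonymity can be eliminated cheaply only when exactly one of its three associated variable-gadgets is ``set to $1$''; the parity and sharing constraints from Lemma~\ref{lemma1} are exactly what make the counting in the budget $B$ work out to force the 1-in-3 condition rather than some looser at-most/at-least condition.

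The construction would go as follows. First I would build a base graph in which every vertex already has $6$ partners sharing a neighbor, so that $G$ is $(6,1)$-anonymous and every vertex is \emph{deficient} by exactly one unit of residual anonymity (each vertex needs one more sharing partner to reach $(7,1)$). The edges we add must supply these missing units. I would design variable-gadgets with two canonical ``states'' (true/false), where each state corresponds to a different cheap way of patching the deficits internal to that gadget, and triple-gadgets wired to their three variable-gadgets so that the deficit at the triple-gadget's central vertices can be satisfied by the added edges if and only if exactly one incident variable is in the true state. Because each variable occurs in exactly $3$ triples and triples pairwise share at most one variable, the added edges from distinct gadgets do not interfere or create accidental multi-edges, which keeps the per-gadget accounting independent; the even number of triples lets me set $B$ so that satisfying all deficits with no slack forces a consistent global assignment. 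For the extension to $k\geq 6$, I would pad the base graph with $k-6$ extra ``always-shared'' partners per vertex, shifting the anonymity level from $(6,1)$ to $(k,1)$ and the target from $(7,1)$ to $(k+1,1)$ without changing the residual-deficit structure, so the same reduction applies verbatim.

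The two directions of the equivalence are then the core of the argument. For the forward direction, from a 1-in-3 satisfying assignment I would read off the gadget states and exhibit an explicit edge set of size exactly $B$ that kills all residual anonymity, verifying the resulting graph is $(7,1)$-anonymous (and, with the stronger wiring, that the shared neighbors are \emph{original} neighbors, giving the strong version). For the converse, I would argue that any edge set of size at most $B$ must, by the tightness of the budget and the independence of gadgets, put each gadget into one of its canonical states and satisfy each triple-gadget, and that a satisfied triple-gadget forces exactly one of its variables true; collecting these yields a valid 1-in-3 assignment.

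The hard part will be the converse (soundness) direction: I must rule out ``cheating'' edge additions that anonymize the deficient vertices in some unintended way --- for instance, edges that simultaneously serve two gadgets, or that exploit condition (b) or (c) of Proposition~\ref{prop:characterization} to gain anonymity more cheaply than the intended true/false patches. The conditions from Lemma~\ref{lemma1} (exact occurrence count, bounded pairwise sharing, even triple count) are precisely the levers for this: they enforce that gadgets are edge-disjoint enough that a below-budget solution cannot borrow savings across gadget boundaries, so I expect the main technical effort to be a careful case analysis showing that every way of spending the budget $B$ decomposes into canonical per-gadget moves. Handling the weak and strong variants uniformly, and confirming the padding argument preserves $(k,1)$-anonymity of the base for all $k\geq 6$, are the remaining details.
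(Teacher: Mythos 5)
Your proposal starts from the right place (the restricted 1-in-3 satisfiability instance of Lemma~\ref{lemma1}), but it is a plan rather than a proof, and the part you defer is precisely the entire technical content: you never construct the variable gadgets, the triple gadgets, or the budget $B$, and you explicitly leave the soundness case analysis for later. Worse, the local forcing you rely on cannot work as stated. Edge additions are \emph{monotone}: adding edges never decreases anyone's anonymity, so no triple gadget can be ``satisfiable by added edges if and only if \emph{exactly} one incident variable is true''---a triple whose variable gadgets have two variables ``true'' is still anonymized, just wastefully. Exactly-one-ness can only be enforced by a global counting argument against the budget, never by a gadget locally. Your design premise also makes this counting harder than it needs to be: if \emph{every} vertex carries one unit of residual anonymity, then every added edge interacts with many deficient vertices at once, and the ``independence of gadgets'' you invoke to decompose a below-budget solution into canonical per-gadget moves is exactly what you would have to prove and have not.

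The paper's reduction shows how to avoid stateful gadgets altogether. Take the incidence bipartite graph $G=(U,V,E)$ of the instance ($U$ = triples, $V$ = variables, the number of triples being $6m$), and attach $5$ pendant degree-one vertices to each triple-vertex. Then the pendants and the variable-vertices are already $(7,1)$-anonymous, and each triple-vertex has \emph{exactly} $6$ vertices at distance $2$ (two other triples through each of its three variables, all distinct because no two triples share two variables), so the only deficient vertices are the $6m$ triple-vertices, each deficient by exactly $1$. A single added edge can reduce the total residual anonymity by at most $6$ (at best, an edge joining two variable-vertices, which donates one new sharer to each of their $\leq 6$ incident triples), so any solution needs at least $m$ edges; and a solution of size exactly $m$ forces every edge to achieve this maximum with no overlap, i.e., its $2m$ endpoints are variables hitting each triple exactly once---a 1-in-3 assignment. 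Conversely, a 1-in-3 assignment selects $2m$ variables covering all triples, and any perfect matching of $m$ new edges among them (possible since variable--variable edges are absent in the bipartite graph) anonymizes everything, in both the weak and strong senses since the shared neighbors are original ones. The truth assignment is thus encoded in \emph{which vertices are endpoints of added edges}, not in gadget states, and the exactly-one condition falls out of the arithmetic $2m \times 3 = 6m$. For $k\geq 6$ the paper pads each triple-vertex with $k-2$ pendants plus one auxiliary vertex of degree $k-5$ whose remaining neighbors hang off a $(k+2)$-clique; your ``always-shared partners'' padding gestures at the same idea but would need to be specified and checked not to disturb the deficiency count.
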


\begin{proof}
We show this via a reduction from the 1-in-3 satisfiability
problem. We take an instance of the 1-in-3 satisfiability problem
satisfying the constraints of Lemma~\ref{lemma1}. We further
assume that the number of triples in this instance is a multiple
of 3, since if it is not a multiple of 3, it is easy to see that
there will be no satisfying assignment. Since we also assume that
the number of triples is even, the number of triples is in fact of
the form $6m$.

Taking this instance, we now form a cubic bipartite graph
$G=(U,V,E)$ by creating a vertex in $U$ for each triple and a
vertex in $V$ for each variable, with the two vertices connected
by an edge if the variable occurs in the triple. We add 5 new
neighbors of degree 1 to each vertex in $U$. Each of these added
neighbors and the vertices in $V$ are $(7,1)$-anonymous, but the
vertices in $U$ have only 6 vertices at distance 2, namely the 2
other neighbors of each of the 3 neighbors in $V$, giving
$(6,1)$-anonymity. We would like to increase the anonymity of
these vertices so that they are also $(7,1)$-anonymous. Note that
a solution to this anonymity problem has to consist of at least
$m$ edges. This is because the total residual anonymity of the graph is $6m$ and each new edge can reduce the residual anonymity by at most $6$. Now, if it were possible to select $2m$ vertices in $V$
that were adjacent to all the $6m$ vertices in $U$, we could
insert a perfect matching of $m$ edges between these $2m$ vertices
and simultaneously increase the anonymity of all the vertices in
$U$ by at least 1. This would correspond to a solution to the
1-in-3 satisfiability problem. Similarly, if there is a solution
to the anonymity problem that involves the addition of only $m$
edges, it must necessarily correspond to a solution to the 1-in-3
satisfiability problem. Thus a solution to the 1-in-3
satisfiability problem exists if and only if the solution to the
anonymity problem involves the addition of $m$ extra edges.

For $k\geq 6$, add $k-2$ nodes of degree 1 attached to each vertex
in $U$. Attach an additional node of degree $k-5$ to each vertex
in $U$. Attach the remaining $k-6$ neighbors of each such
additional node to a clique of size $k+2$. The result then follows
from the case of $k=6$.
\end{proof}

The complexity of minimally obtaining weak and strong
$(k,1)$-anonymous graphs remains open for $k=3,4,5,6$.

\section{{\an}-anonymization}\label{K_1}
We start our study for the $(k,1)$-anonymization problem by giving
two simple $O(k)$-approximation algorithms. We then show that the
approximation factor can be further improved to match a lower bound.
\subsection{${\text O}(k)$-approximation algorithms for $(k,1)$-anonymization}
Let $G=(V,E)$ be the input graph to the weak $(k,1)$-anonymization
problem. Consider the following simple iterative algorithm: at
every step $i$ add to graph $G_i$ ($G_1 = G$) a single edge
between a neighbor of a deficient node $u$ and a node that is not
already in the $2$-neighborhood of $u$ in $G_i$. If there are only
isolated deficient nodes in $G_i$, the algorithm directly connects
a deficient node to a node of a $(k+1)$-clique. If no such clique
exists, the algorithm creates it in a preprocessing step; $(k+1)$
randomly selected nodes are picked for this purpose. Repeat the
process until no deficient nodes remain. We call this algorithm
the {\weakaddition} algorithm. We show that {\weakaddition} is an
${\text O}(k)$-approximation algorithm for the weak
$(k,1)$-anonymization problem. This result is summarized in the
following theorem.

\begin{theorem}\label{thm:weak1}
{\weakaddition} gives a ${\text O}(k)$-approximation
for the weak $(k,1)$-anonymization problem. If the optimal
solution is of size $t$, {\weakaddition} adds at
most $4kt + k^2$ edges.
\end{theorem}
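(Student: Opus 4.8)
The plan is to treat the residual anonymity $r(G)$ (Definition~\ref{dfn:residual}) as a potential function and to squeeze both the cost of \weakaddition\ and the optimum $t$ against it. Concretely, I would prove two inequalities: the algorithm adds at most $r(G)+k^2$ edges, and $r(G)\le 4kt$. Chaining them gives the claimed bound $4kt+k^2$. The whole argument hinges on a single estimate: how much the residual anonymity of the \emph{entire} graph can move when one edge is inserted.

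For the upper bound on the algorithm, I would argue that every edge added in the main loop decreases $r(G)$ by at least $1$. By construction such an edge joins a neighbor $a$ of a deficient node $u$ to a vertex $w$ not yet in the $2$-neighborhood of $u$; afterwards $w$ is a new $2$-neighbor of $u$ via the path $u$--$a$--$w$, so $r(u)$ drops by one (in the isolated case, connecting $u$ to a vertex of a $(k{+}1)$-clique gives $u$ its $k$ missing $2$-neighbors at once, dropping $r(u)$ by $k\ge 1$). Since residual anonymity is nonnegative and edge additions never increase it, the number of main-loop edges is at most $r(G)$. The only other edges are the $\binom{k+1}{2}\le k^2$ edges of the one-time preprocessing clique, built at most once. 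Hence \weakaddition\ adds at most $r(G)+k^2$ edges.

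For the lower bound on $t$, the key lemma is that inserting a single edge $e=(a,b)$ into any graph decreases $r$ by at most $4k$. I would partition the decrease among the only vertices whose $2$-neighborhood can change, namely $a$, $b$, the neighbors of $a$, and the neighbors of $b$, since these are exactly the endpoints of the new length-$2$ paths routed through $e$. The drops at $a$ and at $b$ are each bounded by their own residual anonymity, hence by $k$ each. For the neighbors of $a$ I would use the crucial observation that if a neighbor $p$ of $a$ is deficient, then $p$ already sees every other neighbor of $a$ as a $2$-neighbor (via $p$--$a$--$p'$), forcing $\deg(a)-1<k$; thus $a$ has at most $k$ neighbors and so at most $k$ deficient ones, each losing at most $1$. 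The same holds for the neighbors of $b$. Summing the four contributions bounds the per-edge drop by $4k$, and applying this to the $t$ edges of an optimal solution, which must drive $r$ from $r(G)$ down to $0$, gives $r(G)\le 4kt$. Combining with the previous paragraph yields at most $r(G)+k^2\le 4kt+k^2$ edges, and in the regime $t=\Omega(k)$ (which is forced whenever the clique overhead is actually incurred) the ratio is $O(k)$.

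I expect the main obstacle to be precisely this per-edge reduction lemma, and within it the step showing that a high-degree vertex has \emph{no} deficient neighbors: this is what prevents one edge from repairing unboundedly many vertices and is what pins the constant at $4k$ rather than a degree-dependent quantity. A secondary point to verify carefully is that whenever a non-isolated deficient node remains, a valid edge (a neighbor of $u$ to some non-$2$-neighbor) really exists, so that the main loop is well defined and the ``decrease by at least $1$'' invariant is maintained; I would also confirm that the $k^2$ clique overhead can be charged against $t$ in the isolated-node case, since anonymizing isolated deficient vertices with no available clique forces comparable cost on the optimum as well.
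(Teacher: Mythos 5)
Your proposal follows essentially the same argument as the paper's proof: bound the algorithm's cost by $R+k^2$ (each main-loop edge reduces residual anonymity by at least $1$, plus the one-time clique overhead), and bound the optimum below by $R/4k$ via the same four-way decomposition of a single edge's effect on $R$ (at most $k$ each at the two endpoints, at most $1$ each at their at-most-$k$ deficient neighbors). Your explicit justification that a deficient neighbor of $a$ forces $\deg(a)\le k$ is exactly the observation the paper states parenthetically, so the two proofs coincide in substance.
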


\begin{proof}
Let $R = \sum_{v\in V}r(v)$ be the residual anonymity (see
Definition~\ref{dfn:residual}) of graph $G = (V,E)$. Let
${\text{\sc Wa}}$ be the total number of edges added by the
{\weakaddition} algorithm. It holds that ${\text{\sc Wa}} \leq
R+k^2$. This is because at every step the algorithm adds one edge
that decreases the residual anonymity of the graph by at least
$1$. Therefore the algorithm adds at most $R$ edges. The
additional $k^2$ edges may be required to create a $(k+1)$-clique if such a clique does not exist.

Now assume that the optimal solution adds $t$ edges. Consider an
edge $uv$ of the optimal solution. This edge, at the time of its
addition, could have decreased the residual anonymity of the graph
by at most $4k$. This is because it could have decreased the
residual anonymity of each of $u$ and $v$ as well as the residual
anonymities of at most $k$ neighbors connected to $u$ and at most
$k$ neighbors connected to $v$ (if $u$ or $v$ had more than $k$
neighbors, then none of these neighbors would have been
deficient). Further, the edge $uv$ could have decreased the
residual anonymity of $u$ or $v$ by at most $k$, and the residual
anonymities of each of the $k$ neighbors of $u$ or each of the $k$
neighbors of $v$ by at most 1.

Thus, each edge of the optimal solution could have reduced the
residual anonymity of the graph by at most $4k$ at the time of its
addition. That is, $t \geq R/4k$.

Thus it is clear that $\text{\sc Wa}\leq 4kt +k^2$.
\end{proof}

For the strong $(k,1)$-anonymization problem we show that the
{\strongaddition} algorithm (very similar to {\weakaddition}), is
an ${\text O}(k)$-approximation. {\strongaddition} is also
iterative: in each iteration $i$ it considers graph $G_i$ and adds
one edge to it. The edge to be added is one that connects a
neighbor of a deficient node $u$ to a node that is not already in
the $2$-neighborhood of $u$. This process is repeated till no
deficient nodes remain. We can state the following for the
approximation ratio achieved by the {\strongaddition} algorithm.

\begin{theorem}
{\strongaddition} is a $2k$-approximation algorithm
for the strong $(k,1)$-anonymization problem.
\end{theorem}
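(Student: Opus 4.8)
The plan is to prove the bound exactly as in the proof of Theorem~\ref{thm:weak1}, by sandwiching the cost of \strongaddition\ between the residual anonymity of the input and the optimum. Write $R = \sum_{v \in V} r(v)$ for the residual anonymity of $G$ (Definition~\ref{dfn:residual}), let ${\text{\sc Sa}}$ denote the number of edges added by \strongaddition, and let $t$ be the size of an optimal strong $(k,1)$-solution. I would establish the two inequalities ${\text{\sc Sa}} \le R$ and $R \le 2kt$, from which ${\text{\sc Sa}} \le R \le 2kt$ follows at once. Throughout I use that, for the strong problem, a vertex $w$ counts towards the anonymity of $v$ precisely when $w$ is adjacent in the current graph $H$ to some \emph{original} neighbor of $v$, i.e.\ $N_G(v)\cap N_{H}(w)\neq\emptyset$.

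For the upper bound I would argue that every edge inserted by \strongaddition\ lowers $R$ by at least one. Each iteration picks a deficient vertex $u$, an original neighbor $w\in N_G(u)$, and a vertex $x$ not yet adjacent to any original neighbor of $u$; inserting $wx$ makes $x$ a new vertex sharing the original neighbor $w$ with $u$, so $r(u)$ drops by one while no residual can increase. Such an $x$ always exists when $u$ is deficient (otherwise every vertex would already be a sharer of $u$, contradicting deficiency), and since the algorithm halts exactly when $R=0$, it adds at most $R$ edges.

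The heart of the argument is the lower bound, obtained by adding the $t$ optimal edges one at a time to $G$ and bounding the drop in $R$ caused by a single edge $ab$. The key observation, and the reason the factor improves from the $4k$ of the weak case to $2k$, is that in the strong setting $ab$ cannot help its own endpoints: since $ab$ is newly added, $b\notin N_G(a)$ and $a\notin N_G(b)$, so the only new adjacency (between $a$ and $b$) makes no vertex newly adjacent to an original neighbor of $a$ or of $b$, and hence neither $r(a)$ nor $r(b)$ decreases. The only residuals that can drop are those of the original neighbors of $a$ (each gaining the single new sharer $b$, so by at most one) and of the original neighbors of $b$ (each gaining the new sharer $a$, by at most one).

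It then remains to cap, by $k$, the number of original neighbors of $a$ that are deficient (and symmetrically for $b$). Here I would use a degree argument: if some $z\in N_G(a)$ is deficient when $ab$ is added, then because $a\in N_G(z)$, every current neighbor of $a$ other than $z$ is already a sharer of $z$; deficiency of $z$ forces fewer than $k$ such sharers, so $\deg_H(a)\le k$ and thus $|N_G(a)|\le k$. Consequently a single optimal edge reduces $R$ by at most $k+k=2k$, and summing over the $t$ edges that drive $R$ to $0$ yields $R\le 2kt$. The main obstacle is getting this degree-counting bookkeeping exactly right, in particular verifying that the endpoints genuinely receive no benefit and that the two sides are counted without double-crediting; once that is in place the two inequalities combine to give ${\text{\sc Sa}}\le 2kt$.
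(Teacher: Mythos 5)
Your proposal is correct and follows essentially the same route as the paper's proof: bound the algorithm's cost by the initial residual anonymity (${\text{\sc Sa}} \le R$, since each added edge reduces $R$ by at least $1$), and bound the optimum from below by $R/2k$, using the fact that in the strong setting an added edge cannot help its own endpoints and can only reduce the residuals of the at most $k$ deficient original neighbors of each endpoint. Your write-up merely makes explicit the degree-counting argument (that a deficient original neighbor forces degree at most $k$) which the paper states only parenthetically in the proof of the weak case.
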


\begin{proof}
As in the proof of Theorem~\ref{thm:weak1} consider input graph
$G=(V,E)$ with initial residual anonymity $R$. Every edge added by
the {\strongaddition} algorithm would reduce the residual
anonymity of the graph by at least $1$. Therefore, if the number
of edges added by the {\strongaddition} algorithm is $\text{\sc
Sa}$ we have that $\text{\sc Sa}\leq R$.

Suppose now that the optimal solution adds $t$ edges. An added
edge $uv$ decreases the residual anonymity of the graph by at most
$2k$. This is because the edge can decrease the residual anonymity
of only the {\em original} neighbors of $u$ and $v$ by at most $1$
each and there can be at most $2k$ such deficient neighbors. Thus
$t \geq R/2k$.

From the above we have that $\text{\sc Sa}\leq 2kt$.
\end{proof}

\subsection{$\Theta(\log n)$-approximation algorithms for $(k,1)$-anonymization}

We now provide two simple greedy algorithms for the weak and
strong \an-anonymization problems and show that they output
solutions that are ${\text O}(\log n)$-approximations to the
optimal. We then show that this is the best approximation factor
we can hope to achieve for arbitrary $k$.

We start by presenting {\weakgreedy} which is an ${\text O}(\log
n)$-approximation algorithm for the weak $(k,1)$-anonymization
problem. Consider input graph $G=(V,E)$ that has total
residual anonymity $R$.  The optimal solution to the problem
consists of a set of edges that together take care of all the
residual anonymity in the graph.

\begin{figure}[]
\begin{center}
{\includegraphics[scale =
0.35]{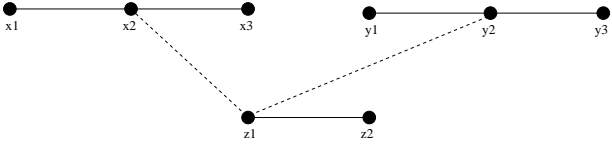}}\caption{Illustrative example of the
reinforcement between new edges in the weak-anonymization
problem.}\label{figure:reinforcement}
\end{center}
\end{figure}

We may be tempted to use a set-cover type solution: greedily
choose edges to add that maximally reduce the residual anonymity
of the graph at each step. However, such a greedy algorithm is not
so easy to analyze in the context of the weak-anonymization
problem. The difficulty in the analysis stems from the fact that
the new edges may \emph{reinforce} each other. That is, the
addition of an edge may bring about a greater reduction in the
residual anonymity of the graph in the presence of other added
edges. Consider, for example, the input graph $G$ shown in
Figure~\ref{figure:reinforcement}. Note that solid lines
correspond to the original edges in $G$. In this case, the
addition of edge $x2z1$ alone does not help in the anonymization
of node $y2$. (Neither does the addition of edge $y2z1$ in the
anonymization of $x2$). However, if edge $y2z1$ is already added
in the graph, then edge $x2z1$ helps in anonymizing node $y2$
as well.

We get around this peculiarity of our problem by greedily choosing
triplets of edges to add instead of singleton edges. Algorithm~1, called {\weakgreedy}, describes the procedure.

\begin{algorithm}[H]\label{alg1}
\caption{{\weakgreedy} for weak $(k,1)$-anonymization}
\begin{algorithmic}[1]
\STATE //Input: $k, G=(V,E)$
\STATE Randomly choose a node $w \in V$
\STATE Add up to ${k+1 \choose 2}$ edges to $E$ to form a $k+1$-clique at $w$
\STATE Compute $R = $ residual anonymity of $G$
\WHILE{$R > 0$}
\STATE Find triplet $uv, uw, vw$ that maximally decrease $R$
\STATE $E = E \cup \{uv\} \cup \{uw\} \cup \{vw\}$
\STATE Update $R$
\ENDWHILE
\end{algorithmic}
\end{algorithm}

\begin{theorem}
{\weakgreedy} is a polynomial-time nearly ${\text
O}(\log n)$-approximation algorithm for the weak \an-anony- mization
problem. If the optimal solution is of size $t$, the algorithm
adds $k^2+6t\log n$ edges.
\end{theorem}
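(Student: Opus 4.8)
The plan is to treat this as a set-cover-style greedy analysis, where the ``universe'' is the residual anonymity $R=\sum_{v}r(v)$ and each greedy iteration (one triplet, i.e.\ at most three edges) plays the role of a chosen set. First I would dispose of the preprocessing: line~3 builds a $(k+1)$-clique, which costs at most $\binom{k+1}{2}\le k^2$ edges and accounts for the additive $k^2$ term; this clique also guarantees that while $R>0$ there is always a beneficial triplet (e.g.\ a deficient node joined to two clique vertices), so the loop is well defined and terminates. I would also record the crude bound $R\le nk\le n^2$ on the initial residual anonymity, so that $\ln R=O(\log n)$.

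The heart of the argument is a \emph{progress lemma}: at any iteration with current residual anonymity $R_i>0$, the best triplet decreases $R_i$ by at least $R_i/t$, where $t=|\mathrm{OPT}|$. To prove it I would compare against the optimum. Adding all $t$ optimal edges $F$ to the current graph $G_i$ yields a supergraph of an anonymous graph, hence drives the residual anonymity to $0$; thus $F$ ``covers'' exactly $R_i$ partner-slots, i.e.\ $\Delta(F)=R_i$. I would then charge each covered slot $(v,u)$ -- a node $u$ that becomes a new shared-neighbor partner of a deficient $v$ via some common neighbor $a$ -- to a single triangle $T_e$ obtained by completing one optimal edge $e\in F$: if only one of $va,ua$ lies in $F$, that edge already realizes the slot against the pre-existing edge, and if both lie in $F$, the triangle on $\{v,a,u\}$ realizes it on its own. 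Because a triangle makes its three vertices pairwise partners \emph{independently of any other added edge}, each $T_e$ attains its charged coverage in isolation, so $\sum_{e\in F}\Delta(T_e)\ge \Delta(F)=R_i$ over these $t$ triangles. By averaging, some $T_e$ has $\Delta(T_e)\ge R_i/t$, and since \weakgreedy\ selects the triplet of maximum decrease, it does at least as well.

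The reason the algorithm commits to triplets rather than single edges -- and the step I expect to be the main obstacle -- is exactly the \emph{reinforcement} phenomenon illustrated in Figure~\ref{figure:reinforcement}: the per-edge coverage function is not submodular, since an edge may reduce $R$ only once a partner edge is present, so a singleton greedy cannot be charged cleanly. Triangles repair this by being self-activating, but the charging above is delicate precisely in the reinforced case, where two optimal edges $va,ua$ share the vertex $a$: one such edge may be responsible for several reinforced slots (one per $F$-neighbor of $a$), while a single completing triangle covers only one of them. Making the subadditive bound $\Delta(F)\le\sum_{e}\Delta(T_e)$ go through therefore requires either distributing the deficiency caps carefully across these $v$-$a$-$u$ paths or absorbing them into a constant factor; this is where the ``nearly'' in the statement and the constant $6$ originate, and it is the only place the clean set-cover calculation needs real care.

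Finally I would assemble the bound. The progress lemma gives $R_{i+1}\le R_i(1-1/t)$, hence $R_j\le R_0\,e^{-j/t}$, so $R_j<1$ (and thus $R_j=0$, since $R_j$ is a nonnegative integer) once $j> t\ln R_0$. Using $R_0\le n^2$, this happens within $j\le 2t\ln n$ iterations, each adding at most three edges, for at most $6t\log n$ edges in the loop. Adding the $k^2$ clique edges yields the claimed total $k^2+6t\log n$, and since the optimum is $t$, this is an $O(\log n)$-approximation, as required.
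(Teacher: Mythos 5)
Your overall skeleton matches the paper's: the $(k+1)$-clique accounts for the additive $k^2$ term, a progress lemma gives geometric decay of the residual anonymity, and $R_0\le kn<n^2$ yields at most $2t\log n$ iterations of three edges each, hence $6t\log n$ edges in the loop. The gap is in the one step that carries all the weight: your proof of the progress lemma. You charge a reinforced slot $(v,u)$, realized by two optimal edges $va,ua\in F$, to the \emph{local} triangle on $\{v,a,u\}$, and you yourself observe that each optimal edge gets only one completing triangle while it may participate in many reinforced slots. You then hope this loss can be ``absorbed into a constant factor,'' but it cannot: take the optimal solution to be a star $au_1,\ldots,au_{k+1}$ where each $u_i$ is deficient and needs $k$ partners (a hub star is exactly the typical shape of a cheap solution, so this case cannot be excluded). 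Then $\Delta(F)\approx k^2$ and every slot is reinforced, while your $k+1$ local triangles each give their three vertices only $O(1)$ new partners, so $\sum_{e\in F}\Delta(T_e)=O(k)$. The subadditivity bound $\Delta(F)\le\sum_e\Delta(T_e)$ thus fails by a factor of $\Theta(k)$, which would only yield an $O(k\log n)$-approximation --- no better than \weakaddition. (A minor point: the constant $6$ has nothing to do with reinforcement; it is simply $3$ edges per iteration times the $2t\log n$ iteration bound coming from $\log(kn)\le 2\log n$.)

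The paper closes exactly this hole with a different conversion of the optimal solution, and this is the idea you are missing: instead of completing optimal edges into triangles locally, replace each optimal edge $uv$ by the triangle $(uv,uw,vw)$ routed through the clique vertex $w$ that the algorithm has already built. Since $w$ lies in a $(k+1)$-clique, any vertex joined to $w$ immediately shares the neighbor $w$ with the $k$ other clique vertices; hence each such triangle \emph{by itself} gives $u$ and $v$ their full complement of $k$ partners, and, because it contains the edge $uv$, it also preserves that edge's benefit to the pre-existing neighbors of $u$ and $v$. Reinforced slots are then not covered but rendered unnecessary: the $t$ triangles through $w$ (together with the clique) cover all remaining residual anonymity with no reinforcement whatsoever, each unit of coverage attributable to a single triangle in isolation. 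With that modified benchmark, the averaging argument you wanted --- some triangle covers at least $R_i/t$, and greedy does at least as well --- goes through verbatim.
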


\begin{proof}
Consider the optimal solution of $t$ edges. These $t$ edges
together take care of all the residual anonymity in the graph. We
can convert this solution to a solution of triplets that consists of at most $k^2 +
3t$ edges: first randomly choose a node $w$ and create a
$(k+1)$-clique amongst $w$ and $k$ other randomly chosen nodes.
Then, for each edge $uv$ of the optimal solution, add a triangle
$(uv, vw, uw)$ to the graph. The resulting graph will clearly
continue to be \an-anonymous. The $t$ triangles in conjunction
with the $(k+1)$-clique take care of all the residual anonymity in
the graph. Further, these triangles do not reinforce each other
because they are all connected to a node of degree $k$.

Going back to Algorithm~1, this means that once a $(k+1)$-clique has been added to
the graph, at each iteration of the algorithm, there must exist
some triangle with a vertex in the $(k+1)$-clique that reduces the
residual anonymity of the graph by a factor of at least $t$ (similar to the argument for the greedy set cover algorithm). And
since the algorithm greedily chooses triangles to add, the
residual anonymity of the graph will decrease by at least this
factor at each step. Since the residual anonymity of the graph can
be at most $kn < n^2$ to begin with, the algorithm will only
proceed for at most $r$ iterations till $(1 - 1/t)^r \leq  1/kn$.
This would mean that $r= {\text O}(t\log(kn)) = {\text O}(2t\log
n)$ and $3r =  {\text O}(6t\log n)$.
\end{proof}

The approximation algorithm for the strong \an-anony- mization
problem is simpler, since added edges cannot
reinforce each other --- an added edge can only help the original
neighbors of its two end points. Algorithm~2 gives the details of
the {\stronggreedy} algorithm.

\begin{algorithm}[H]
\caption{{\stronggreedy} for $(k,1)$-anonymization}
\begin{algorithmic}[1]
\STATE //Input: $k, G=(V,E)$
\STATE Compute $R = $ residual anonymity of $G$
\WHILE{$R > 0$}
\STATE Find edge $uv$ that maximally reduces $R$
\STATE $E = E \cup \{uv\}$
\STATE Update $R$
\ENDWHILE
\end{algorithmic}
\end{algorithm}

Since the added edges do not reinforce each other in the strong
$(k,1)$-anonymization problem, the analysis of {\tt Strong-Greedy} is
similar to the analysis of the greedy algorithm for the standard
set-cover problem.

\begin{theorem}
{\stronggreedy} is a polynomial-time $2\log n$-approximation
algorithm for the strong \an-anonymization problem.
\end{theorem}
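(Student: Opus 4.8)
The plan is to reduce the analysis of {\stronggreedy} to the standard greedy set-cover bound, exploiting the crucial structural fact — already emphasized in the text — that in the \emph{strong} setting added edges do not reinforce each other. First I would set up the set-cover correspondence. Think of the ``universe'' to be covered as the total residual anonymity $R = \sum_{v} r(v)$ of the input graph, counted with multiplicity: each deficient vertex $v$ contributes $r(v)$ units of demand. Adding a single edge $uv$ can only decrease the residual anonymity of the \emph{original} neighbors of $u$ and of $v$ (by Definition~\ref{dfn:strong_anonymity}, an edge at $u$ supplies a new shared neighbor only to nodes whose original neighborhood contains $u$). Hence each candidate edge plays the role of a ``set'' covering some subset of the demand, and — this is the key point — because there is no reinforcement, the amount an edge covers is a function only of the current graph and not of which other new edges happen to be present. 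This is exactly the submodular/monotone structure that makes the greedy argument go through.

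Next I would run the standard greedy potential argument on $R$. Let $t = |E''| - |E|$ be the size of the optimal strong solution, i.e.\ the minimum number of edges whose addition makes $G$ a strong $(k,1)$-transformation. The optimal solution is a collection of $t$ edges that together drive $R$ to $0$; therefore at any intermediate stage with current residual anonymity $R_i > 0$, those same $t$ edges (or the ones among them not yet added) still suffice to cover all of $R_i$. By averaging, at least one single edge must reduce the current residual anonymity by a factor of at least $R_i/t$. Since {\stronggreedy} selects at each step the edge that maximally reduces $R$ (line 4 of Algorithm~2), the chosen edge does at least this well, giving the multiplicative decrease $R_{i+1} \le R_i\,(1 - 1/t)$.

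Finally I would convert this geometric decay into an edge count. Iterating the recurrence yields $R_i \le R_0\,(1-1/t)^i$, and since $R_0 \le kn < n^2$ and the residual anonymity is integer-valued, the process terminates once $R_i < 1$, i.e.\ after $i$ steps with $R_0(1-1/t)^i < 1$. Using $(1-1/t)^t \le e^{-1}$, this gives $i = O\!\left(t\log(kn)\right) = O(t\log n)$; tracking constants as in the preceding {\weakgreedy} analysis (where $R_0 < n^2$ produces the factor $2$ from $\log(n^2) = 2\log n$) yields the claimed bound of $2t\log n$ edges, hence a $2\log n$-approximation. The one place that needs genuine care — the ``main obstacle'' — is justifying the non-reinforcement claim rigorously, since it is what licenses treating each edge as a fixed-value set independent of the others; everything else is the textbook greedy set-cover computation. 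I would verify non-reinforcement directly from Definition~\ref{dfn:strong_anonymity}: a shared neighbor counted for the pair $(v,u)$ must lie in $N_G(v) \cap N_{G'}(u)$, and a newly added edge at $u$ can only enlarge $N_{G'}(u)$ by endpoints that are independent of any other added edges, so the coverage contributed by an edge never depends on the presence of the others.
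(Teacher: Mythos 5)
Your proposal is correct and follows essentially the same route as the paper's own proof: a greedy set-cover analysis on the residual anonymity $R \leq kn < n^2$, using the non-reinforcement property of added edges in the strong setting to get the averaging step $R_{i+1} \leq R_i(1-1/t)$ and hence termination within $2t\log n$ steps. The only difference is that you explicitly justify non-reinforcement from Definition~\ref{dfn:strong_anonymity}, a point the paper asserts without proof; this is a welcome tightening but not a different argument.
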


\begin{proof}
Suppose the optimal solution adds $t$ edges, to reduce the
residual anonymity of the graph by at most $kn < n^2$. Since edges
of the solution do not reinforce each other, there must exist some
edge that reduces the residual anonymity of the graph by at least
a factor of $t$.

Therefore at each iteration of Algorithm~2, we greedily choose an
edge to add that must cause at least this much reduction in the
residual anonymity of the graph. The algorithm will thus terminate
after $r$ steps where ${(1-1/t)}^r \leq 1/(kn)$, or
$r=t\log(kn)\leq 2t\log n$.
\end{proof}

We next show that $\log n$ is the best factor we
could hope to achieve for unbounded $k$, for both the weak and strong
$(k,1)$-anonymization problems via an approximation-preserving reduction from the hitting set problem.

\begin{theorem}
The weak and strong $(k,1)$-anonymization problems with $k$ unbounded are
$\Omega(\log n)$-approximation NP-hard.
\end{theorem}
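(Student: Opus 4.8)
The plan is to establish the lower bound by an approximation-preserving (gap-preserving) reduction from the \emph{hitting set} problem. Recall that hitting set is equivalent to set cover, and that it is NP-hard to approximate its optimum within a factor of $\Omega(\log n)$, where $n$ is the number of ground elements. Since our greedy algorithms already give matching $O(\log n)$ upper bounds, it suffices to build, from a hitting set instance, a graph whose optimal \an-anonymization cost equals the hitting set optimum up to a constant factor; any $o(\log n)$-approximation for anonymization would then yield an $o(\log n)$-approximation for hitting set. Because the constructed graph will have size polynomial in the instance, $\log N = \Theta(\log n)$ for the graph on $N$ vertices, so the gap transfers.

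The construction I would use is the following. Given a hitting set instance with elements $x_1,\dots,x_n$ and sets $T_1,\dots,T_m$, create one \emph{element vertex} $\hat{x}_j$ per element and one \emph{set vertex} $\hat{T}_i$ per set, joining $\hat{x}_j$ to $\hat{T}_i$ exactly when $x_j\in T_i$. I then attach a fixed padding structure (e.g.\ dummy pendant vertices on each $\hat{x}_j$ and a large clique reservoir of non-adjacent anchor vertices) so that (i) every element vertex, dummy, and anchor is already \an-anonymous in $G$, and (ii) each set vertex $\hat{T}_i$ has residual anonymity (Definition~\ref{dfn:residual}) calibrated to a controlled value, with $k$ chosen so that $\hat{T}_i$ is the only type of deficient vertex. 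The point of the construction is that ``selecting element $x_j$'' corresponds to adding a single edge from $\hat{x}_j$ to an anchor $z$: such an edge makes $z$ a new shared-neighbor witness for \emph{every} set vertex adjacent to $\hat{x}_j$, i.e.\ for all $\hat{T}_i$ with $x_j\in T_i$. Thus one edge simultaneously reduces the deficiency of precisely the sets containing $x_j$, mirroring the covering power of an element in hitting set.

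With this in place I would argue both directions. For the forward direction, a hitting set $H$ yields an anonymization adding $|H|$ edges (one anchor edge per chosen element), which satisfies every $\hat{T}_i$ since $H$ meets every $T_i$. For the reverse direction in the \emph{strong} case the argument is clean and relies directly on Definition~\ref{dfn:strong_anonymity}: an added edge can only help the \emph{original} neighbors of its endpoints, so the only way to supply $\hat{T}_i$ with a new witness is an edge incident to one of its original neighbors, which are exactly the element vertices $\hat{x}_j$ with $x_j\in T_i$; reading off the element vertices touched by the solution gives a hitting set of size at most the number of added edges, and edges not incident to element vertices are wasteful. Hence the strong optimum equals the hitting set optimum up to constants, and the $\Omega(\log n)$ hardness follows.

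The hard part will be the \emph{weak} case, where an added edge may create brand-new shared neighbors and where edges can \emph{reinforce} one another, so a solver could try to anonymize a deficient vertex by attaching it directly to a high-degree vertex rather than routing through the intended element vertices. The main obstacle is therefore to rule out such ``shortcut'' anonymizations: I would strengthen the gadget so that a deficient vertex can only acquire new shared-neighbor witnesses through element vertices (for instance by bounding all relevant degrees so a single stray edge yields too few witnesses, and by ensuring no cheap common neighbor exists outside the element layer), forcing any weak solution to also induce a hitting set of comparable cost. Once this structural containment is proved, the same cost-preserving correspondence applies and both the weak and strong \an-anonymization problems inherit the $\Omega(\log n)$ inapproximability of hitting set.
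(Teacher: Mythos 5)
Your high-level strategy---an approximation-preserving reduction from hitting set through the bipartite set/element incidence graph, where an edge added at an element vertex creates one new shared-neighbor witness for every set containing that element---is exactly the paper's strategy. However, your gadget has genuine flaws. First, hanging the dummy pendants off the \emph{element} vertices backfires: a pendant of $\hat{x}_j$ is adjacent to $\hat{x}_j$, and so is every set vertex $\hat{T}_i$ with $x_j\in T_i$, so each such pendant is itself a witness for $\hat{T}_i$. The pendants thus inflate precisely the anonymity you need to keep deficient (with enough pendants per element no set vertex is deficient and the instance trivializes), while doing nothing for $\hat{x}_j$ itself, whose witnesses are only the elements co-occurring with it in some set---possibly far fewer than $k$, so your requirement (i) fails. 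The paper instead attaches $k+1$ pendants to each \emph{set} vertex, which anonymizes the whole element side and contributes no witness to any set vertex. Second, ``calibrating'' the residual anonymity of set vertices cannot be done by choosing $k$ alone: different sets intersect different numbers of other sets, so a single $k$ leaves non-uniform deficiencies, and then one anchor edge per chosen element (which supplies exactly one new witness per containing set) no longer suffices for your forward direction. The paper needs an explicit equalization gadget---a unique element $v_i$ in each $S_i$ plus padding sets ${\cal T}$---to force every $S_i$-vertex to have deficiency exactly $1$; this idea is missing from your construction.

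The deeper gap is your plan for the weak case. You propose to \emph{rule out} shortcut anonymizations structurally, ``so that a deficient vertex can only acquire new shared-neighbor witnesses through element vertices,'' e.g.\ by bounding degrees. This cannot work: in weak anonymization, adding a single edge from a deficient set vertex $\hat{T}_i$ to \emph{any} vertex $z$ of positive degree immediately makes every neighbor of $z$ a witness for $\hat{T}_i$, and every vertex in any sensible construction has positive degree; when the deficiency is $1$, no degree bound prevents this. What closes the weak case is not prohibition but a charging (extraction) argument: from an arbitrary weak solution of $s$ edges, build a hitting set by taking, for each edge endpoint, the element itself if the endpoint is an element vertex (it hits all sets containing it), and an arbitrary element of $T_i$ (say the unique $v_i$) if the endpoint is the set vertex $\hat{T}_i$, one of its pendants, or a vertex newly joined to $\hat{T}_i$. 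Every deficient set fixed by the solution is then hit by one of these at most $2s$ elements, so the hitting-set optimum is at most $2s$; combined with the forward direction (a hitting set of size $t$ yields roughly $\lceil t/2\rceil$ added edges by matching the chosen elements pairwise, as the paper does) the reduction is approximation preserving \emph{despite} the existence of shortcuts. Your strong-case argument is essentially sound once the gadget is repaired as above, with one more caveat: your anchors must have original neighbors (e.g.\ form a clique among themselves), since otherwise strong anonymization of the anchors themselves is infeasible.
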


\begin{proof}
Hitting set is $\Omega(\log n)$-approximation NP-hard. Consider an
instance of the hitting-set problem consisting of sets ${\cal S} =
\{S_1, S_2, \ldots\}$. Let $k$ be greater than the maximum number
of sets intersecting any one set $S_i$. Add a unique element $v_i$
to each $S_i$. Additionally, construct sets ${\cal T} = \{T_1,
T_2, \ldots\}$ such that each $T_i$ contains the appropriate
$v_i$'s so that every $S_i$ intersects exactly $k-1$ other sets. In
every set $T_i$ add an additional element $w$ so that each set in
${\cal T}$ intersects at least $k$ other sets. Now construct a
bipartite graph $G=(U,V,E)$, where the vertices of $U$ correspond
to the sets in ${\cal S}$ and ${\cal T}$, the vertices of $V$
correspond to individual members of these sets, with $E$
indicating membership of elements from $V$ in sets from $U$. For
every element $u$ in $U$ create $(k+1)$ new vertices of degree $1$
in $V$ and connect them to $u$. In the resulting graph, the
vertices in $V$ are all \an-anonymous, however the vertices in $U$
that correspond to sets in ${\cal S}$ are only $(k-1,
1)$-anonymous. Consider the $t$ nodes in $V$ that are the optimal
solution to the hitting-set problem. Then matching these nodes
using $\lceil t/2\rceil$ edges will be an optimal solution to the
strong or weak $(k, 1)$-anonymization problem in the bipartite
graph $G=(U,V,E)$. Therefore, an optimal solution to the
anonymization problem corresponds to an optimal solution to the
hitting-set problem which is $\Omega(\log n)$-hard to approximate.
\end{proof}

\section{{\anonvar}-anonymization for $\ell > 1$}\label{K_L}

In this section we provide algorithms for the weak and strong
{\anonvar}-anonymization problems when $\ell > 1$. 

The algorithm for weak {\anonvar}-anonymization is a randomized algorithm that constructs a bounded-degree expander between deficient vertices. Given a
$(k,\ell')$-anonymous graph $G$, it solves the weak $(k,
\ell)$-anonymization problem by adding only ${\text
O}(\sqrt{k-k')\ell})$ additional edges at each vertex. The algorithm
can also be easily adapted to solve the weak $(k,
\ell)$-anonymization problem for any input graph irrespective of
its initial anonymity. 

\begin{theorem}
There exists a randomized polynomial-time algorithm that adds ${\text O}(\sqrt{(k-k')\ell})$ edges per vertex and increases the anonymity of
a graph from $(k',\ell)$ to $(k,\ell)$ where $\ell\leq k\leq
n^{1-\epsilon}$ and $\epsilon$ is a constant greater than $0$.
\end{theorem}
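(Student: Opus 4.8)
The plan is to repair all deficient vertices at once by adding a single sparse random graph among them and then arguing, via expander-type concentration, that this creates the missing sharing relationships. Since $G$ is already $(k',\ell)$-anonymous, every vertex has at least $k'$ partners, so each deficient vertex needs only $d:=k-k'$ \emph{additional} partners, each sharing at least $\ell$ neighbors with it. Let $S$ be the set of deficient vertices and $m=|S|$. First I would partition $S$ into groups of size $g=\Theta(d)$ (padding the last group with already-anonymous vertices if needed). The key observation is that inside a group we need not single out specific partners: if we can guarantee that \emph{every} pair of vertices in a common group ends up sharing at least $\ell$ common neighbors among the newly added edges, then each such vertex automatically acquires $g-1\ge d$ fresh partners and becomes $(k,\ell)$-anonymous.

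\textbf{The construction and parameter choice.} Within each group I would add a uniformly random $r$-regular graph on its $g$ vertices (discarding the few candidate edges already present in $G$, which is harmless since $g\le k\le n^{1-\epsilon}$ leaves ample room), where $r=\Theta(\sqrt{d\,\ell})$. The justification for this choice is a short first-moment count: for two fixed vertices $u,v$ in the same group, the expected number of their common neighbors in a random $r$-regular graph on $g$ vertices is $\approx r^2/g$, and with $g=\Theta(d)$ and $r=\Theta(\sqrt{d\ell})$ this is $\Theta(\ell)$. Tuning the hidden constants so that the expectation is, say, $2\ell$ provides the slack needed to drive the \emph{actual} common-neighbor count above $\ell$ with high probability. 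Every vertex receives exactly $r=O(\sqrt{(k-k')\ell})$ new incident edges, which is the claimed per-vertex bound, and the whole procedure is plainly polynomial time.

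\textbf{The analysis (the main obstacle).} The crux is to upgrade the in-expectation statement to a bound that holds \emph{simultaneously for all pairs in all groups}. I would argue this in one of two interchangeable ways. (i) Spectrally: a random $r$-regular graph is whp a good expander with second eigenvalue $\lambda=O(\sqrt{r})$; writing its adjacency matrix as $A=\frac{r}{g}J+E$ with $\|E\|\le\lambda$, the number of common neighbors of $u,v$ is the $(u,v)$ entry of $A^2$, namely $r^2/g\pm O(\lambda)$ by the expander mixing lemma. Since $\lambda=O(\sqrt r)=O((d\ell)^{1/4})$ is of lower order than the main term $\Theta(\ell)$, every pair clears the threshold. (ii) Directly: for each fixed pair the common-neighbor count concentrates around its mean by a Chernoff/martingale bound with failure probability $\exp(-\Omega(\ell))$, and a union bound over the $O(mg)=O(nk)$ relevant pairs closes the argument. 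This step is the one I expect to be most delicate: both routes require the $\Theta(\ell)$ main term to dominate the fluctuation, and the union bound forces the per-pair failure probability below $1/\mathrm{poly}(n)$. This is precisely where the hypothesis $k\le n^{1-\epsilon}$ enters — it bounds the number of pairs and guarantees enough room to realize the random regular graph as a simple subgraph disjoint from $E$. Small $\ell$ needs extra care: there one inflates $r$ by a logarithmic factor, or invokes an explicit expander family with deterministic mixing, to keep the failure probability under control.

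\textbf{Corner cases and conclusion.} Two boundary situations require separate but routine handling. When $d$ is so small that $g$ would fall below $r$, or when $\ell>k-k'$ so that an $r$-regular \emph{simple} graph no longer fits inside a group of size $\Theta(d)$, I would instead route the $\ell$ shared neighbors through a balanced bipartite incidence to auxiliary hub vertices, giving each deficient vertex and each hub degree $\Theta(\sqrt{(k-k')\ell})$; the same balancing (vertex-degree $\times$ hub-degree $\gtrsim \ell\,d$, optimized at $\sqrt{d\ell}$) reproduces the identical per-vertex bound. To adapt the algorithm to an arbitrary input graph one simply runs it with $k'=0$. Assembling the pieces: every deficient vertex gains at least $d=k-k'$ partners, each sharing $\ge\ell$ neighbors, so the output is $(k,\ell)$-anonymous; the per-vertex edge increase is $O(\sqrt{(k-k')\ell})$; and the construction runs in randomized polynomial time, which completes the argument.
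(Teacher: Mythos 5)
Your construction hinges on the claim that, inside a group of size $g=\Theta(d)$ (where $d=k-k'$) carrying a random $r$-regular graph with $r=\Theta(\sqrt{d\ell})$, every pair ends up with at least $\ell$ common neighbors. Neither of your two routes establishes this in the regime where your construction even makes sense, and this is a genuine gap, not a technicality. (i) The spectral route misstates the bound: the $(u,v)$ entry of $A^2$ is $r^2/g\pm\lambda^2$ (from $|(E^2)_{uv}|\le\|Ee_u\|\,\|Ee_v\|\le\lambda^2$; equivalently, applying the mixing lemma with $S=N(u)$, $T=\{v\}$ gives error $\lambda\sqrt{r}$), not $r^2/g\pm O(\lambda)$. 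With $\lambda=\Theta(\sqrt r)$ the error is $\Theta(r)=\Theta(\sqrt{d\ell})$, which \emph{dominates} the main term $\Theta(\ell)$ whenever $\ell<d$ --- and $\ell<d$ is exactly the regime in which an $r$-regular simple graph fits inside a group of size $\Theta(d)$ at all. So "deterministic mixing" cannot rescue anything. (ii) The Chernoff route has per-pair failure probability $e^{-\Omega(\ell)}$, so the union bound needs $\ell=\Omega(\log n)$; for constant $\ell$ the failure is not a union-bound artifact --- the common-neighbor count has mean $\Theta(1)$, so a constant fraction of pairs in every group genuinely falls below $\ell$. Your proposed fix of inflating $r$ by a logarithmic factor forces $r=\Omega(\sqrt{d\log n})$, which violates the claimed $O(\sqrt{(k-k')\ell})$ per-vertex bound when $\ell=o(\log n)$. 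Hence the theorem, which is claimed for all $\ell\le k$, is not proved by your argument for small $\ell$.

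There is a second, independent gap: you assert that each vertex "automatically acquires $g-1\ge d$ \emph{fresh} partners," but nothing rules out that the other members of its group were already partners in $G$. Deficient vertices can be pairwise partners (e.g., the size-$m$ side of a complete bipartite graph $K_{\ell,m}$ with $m-1<k$: all $m$ vertices already share $\ell$ neighbors pairwise, yet all are deficient); grouping such vertices together and making them share more neighbors creates no new partners. This is precisely where the hypothesis $k\le n^{1-\epsilon}$ must be used: one needs the group/partition assignment to be random over \emph{all} $n$ vertices so that, with high probability, the $k-k'$ newly created partners avoid the at most $k'$ pre-existing ones --- you invoke $n^{1-\epsilon}$ only for the union bound and for edge-disjointness from $E$, which is a different matter. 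The paper's construction sidesteps both gaps at once: it randomly partitions all $n$ vertices into supernodes of size $\ell$, builds an expander of degree $\sqrt{(k-k')/\ell}$ on the supernodes, and replaces each expander edge by a complete bipartite $K_{\ell,\ell}$; then any two vertices whose supernodes are at distance $2$ share an entire intermediate supernode, i.e., exactly $\ell$ common neighbors \emph{deterministically}, so no concentration argument is needed and all $\ell\ge 1$ (including constants) are covered, with randomness used only to guarantee freshness. Notably, your own fallback for the corner case $\ell>k-k'$ (routing shared neighbors through hub sets with balanced degrees $\sqrt{d\ell}$) is essentially this construction --- the paper simply uses it in every regime.
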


{\em Proof Sketch:} Randomly partition the $n$ vertices into $n/\ell$ sets
of size $\ell$. Treat each set as a ``supernode''.
Construct an expander of degree $\sqrt{(k-k')/\ell}$ on these
$n/\ell$ supernodes. In this way each supernode has $(k-k')\ell$
supernodes in its $2$-neighborhood that can be reached through just one
intermediate supernode.  Replace each edge $uv$ of this expander
with a $K_{\ell, \ell}$ clique of edges between the constituent
vertices of the supernodes $u$ and $v$. Thus each vertex now has $k-k'$
vertices in its $2$-neighborhood that can be reached through
an intermediate set of size $\ell$. Since $l\leq k\leq
n^{1-\epsilon}$, we can show that with high probability, none of
these $k-k'$ new vertices will coincide with the $k'$ vertices
previously in the node's $2$-neighborhood.\\

As a final result, we present the algorithm for strong {\anonvar}-anonymization. This algorithm is a generalization of the {\stronggreedy} algorithm (see Algorithm~2).
The difference is that instead of picking a single edge to add at
every iteration the algorithm picks edges in groups of size at
most $\ell$. At each iteration it picks the group that causes the largest
reduction in the residual anonymity of the graph. The pseudocode
is given in Algorithm~3.

\begin{algorithm}[H]
\caption{{\stronggreedy} for $(k, \ell)$-anonymization}
\begin{algorithmic}[1]
\STATE //Input: $k, \ell, G=(V,E)$ \STATE Compute $R = $ residual
anonymity of $G$ \WHILE{$R > 0$} \STATE Find set of edges ${\cal
E}$, with $|{\cal E}| \leq \ell$, that maximally reduces $R$
\STATE $E = E \cup \cal E$ \STATE Update $R$ \ENDWHILE
\end{algorithmic}
\end{algorithm}

We can state the following theorem for the approximation factor of
Algorithm~3 when $\ell$ is a constant.

\begin{theorem}
Consider $G=(V,E)$ to be the input graph to the strong {\anonvar}-anonymization problem.
Also assume $\ell$ is a constant. 
Let $t$ be the optimal number of edges that need to be added to solve the strong {\anonvar}-anonymization problem on $G$. Then Algorithm~3 is a
polynomial-time $O(t^{\ell-1}\log n)$-approximation algorithm.
\end{theorem}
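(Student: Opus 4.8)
The plan is to run the standard greedy / set-cover style potential argument, but over \emph{groups} of up to $\ell$ edges rather than single edges, using the residual anonymity $r(G)$ of Definition~\ref{dfn:residual} as the potential. Write $R$ for the current residual anonymity and let $t$ be the size of an optimal edge set $F^\star$. The first thing I would record is a monotonicity observation: because the strong-anonymity condition only ever \emph{gains} from enlarging the current neighbor sets (the relevant overlaps are $|N_G(v)\cap N_{G_i}(u)|$, with $N_G(v)$ fixed), adding $F^\star$ to \emph{any} intermediate graph $G_i$ produced by \stronggreedy\ still yields a strong $(k,\ell)$-transformation. Hence at every iteration the $\le t$ optimal edges already reduce the current $R$ all the way to $0$. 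The whole proof then reduces to one existence statement: at every step there is a group of at most $\ell$ edges whose addition reduces $R$ by a factor $\Omega(1/t^{\ell})$.

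For the existence step I would decompose the effect of $F^\star$ into \emph{witness groups}. Fix an intermediate $G_i$. For each deficient $v$, adding $F^\star$ creates at least $r(v)$ new sharers; for each newly created sharer $u$ of $v$ we have $|N_G(v)\cap N_{G_i}(u)|<\ell$ before and $\ge\ell$ after, the gain coming entirely from edges of $F^\star$ incident to $u$ whose other endpoint lies in $N_G(v)$. I would take a minimal such subset $W_{v,u}\subseteq F^\star$; it is a \emph{star} centered at $u$, has size at most $\ell$, and adding $W_{v,u}$ alone already makes $u$ a sharer of $v$, so it decreases $R$ by at least $1$. Summing over deficient vertices there are at least $R$ such events, while the number of \emph{distinct} candidate groups is at most the number of subsets of $F^\star$ of size $\le\ell$, namely $\sum_{j\le\ell}\binom{t}{j}=O(t^{\ell})$ for constant $\ell$. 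By pigeonhole some single star group $W^\star$ is the witness of $\Omega(R/t^{\ell})$ of these events; since all of them share the same center $u$, adding $W^\star$ once simultaneously turns $u$ into a new sharer of every corresponding $v$, so $W^\star$ alone reduces $R$ by $\Omega(R/t^{\ell})$. As \stronggreedy\ picks the best group of size $\le\ell$, its chosen group does at least as well.

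With the existence claim in hand the bookkeeping is routine. The multiplicative decrease $R\mapsto R\bigl(1-\Omega(1/t^{\ell})\bigr)$ together with the initial bound $R\le kn<n^2$ gives $O(t^{\ell}\log n)$ iterations; each iteration adds at most $\ell=O(1)$ edges, so \stronggreedy\ adds $O(t^{\ell}\log n)$ edges in total. Dividing by the optimum $t$ yields the approximation ratio $O(t^{\ell-1}\log n)$. As a sanity check, $\ell=1$ recovers the $O(\log n)$ bound of the \stronggreedy\ analysis for $(k,1)$-anonymization. For the polynomial-time claim I would note that finding the best group amounts to trying all $O(n^{2\ell})$ sets of at most $\ell$ non-edges and evaluating the induced change in $R$, which is polynomial for fixed $\ell$.

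The hard part is the middle step: controlling the \emph{reinforcement} among the optimal edges. Unlike the $\ell=1$ case, a single optimal edge may reduce $R$ by nothing in isolation, so one cannot charge the reduction edge-by-edge. The key technical points are therefore (i) the monotonicity that keeps $F^\star$ a valid solution at every intermediate step, and (ii) the observation that each newly created sharer relationship is witnessed by a \emph{star} of at most $\ell$ optimal edges, so that counting distinct witnesses costs only $O(t^{\ell})$ and a single best group captures the entire pigeonholed reduction at once. Verifying that the minimal witness always has size at most $\ell$, and that co-witnessed demands are genuinely satisfied simultaneously, is where the case analysis will concentrate.
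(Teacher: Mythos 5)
Your proposal is correct and follows essentially the same route as the paper's proof sketch: the paper likewise observes that only groups of at most $\ell$ edges incident to a single vertex can create a new sharer, notes that the $t$ optimal edges define at most $t^{\ell}$ such subsets, and then runs the greedy set-cover analysis to get $O(t^{\ell}\log n)$ iterations and hence an $O(t^{\ell-1}\log n)$ ratio. Your star-witness decomposition with the pigeonhole step and the monotonicity observation are exactly the details behind the paper's unproved remark that ``reinforcement effects between subsets of edges are taken care of,'' so you have filled in the sketch rather than diverged from it.
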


{\em Proof Sketch:} In the $(k, \ell)$- anonymization problem,
groups of up to $\ell$ edges at a time incident at a single vertex
can reduce the residual anonymity of a vertex adjacent to the $\ell$
endpoints of these edges. The $t$ edges added by the optimal
solution define at most $t^{\ell}$ subsets of at most $\ell$ edges
incident to a single vertex. By selecting such subsets greedily as
in a set-cover problem we ultimately reduce the residual anonymity
of the graph to $0$ in ${\text O}(t^{\ell}\log n)$ steps. We can show that reinforcement effects between subsets of edges are taken care of. This proves the $O(t^{\ell}\log n)$ bound on the number of edges
selected. If $\ell$ is a small constant, the approximation factor may not be too large. Further, in practice this simple algorithm may perform better than this worst case bound indicates. 

\section{Conclusions}\label{sec:conclusions}
Motivated by recent studies on privacy-preserving graph releases,
we proposed a new definition of anonymity in graphs. We further
defined two new combinatorial problems arising from this
definition, studied their complexity and proposed simple,
efficient and intuitive algorithms for solving them.

The key idea behind our anonymization scheme was to enforce
that every node in the graph should share some number of its
neighbors with $k$ other nodes. The optimization problems we
defined ask for the minimum number of edges to be added to the
input graph so that the anonymization requirement is satisfied.
For these optimization problems we provided algorithms that solve
them exactly ($k=2$) or approximately ($k>2$).

An interesting avenue for future work would be to fully characterize the
kinds of attacks that our definition of anonymity protects
against, and to study the impact of our anonymization schemes on the utility of the graph release.

Finally, we believe that the combinatorial problems we have
studied in this paper are interesting in their own right, and may
also prove useful in other domains. For example, at a high level
there is a similarity between the problem we study in this paper
and the problem of constructing reliable graphs for, say, reliable
routing.

\bibliographystyle{plain}
\bibliography{graphanon}

\end{document}